\def\A{\mathbf A}
\def\V{\mathbf V}
\def\tr{\mathrm{tr}}
\def\b{\mathbf b}
\def\e{\mathbf e}
\def\f{\mathbf f}
\def\g{\mathbf g}
\def\h{\mathbf h}
\newcommand\x{\mathbf x}
\def\z{\mathbf z}
\def\F{\mathbf F}
\def\I{\mathbf I}
\def\M{\mathbf M}
\def\N{\mathbf N}
\def\1{\mathbf 1}
\def\0{\mathbf 0}
\newcommand\Pb{\mathbf P}
\newcommand\Rb{\mathbf R}
\newcommand\Cc{\mathcal{C}}
\newcommand\Fc{\mathcal{F}}
\newcommand\Ss{\mathcal{S}}
\newcommand\Rbb{\mathbb R}
\newcommand\vol{\mathrm{vol}}
\newcommand\MVEE{\mathrm{MVEE}}
\newcommand\ind{\textit{ind}}
\newtheorem{theorem}{Theorem}
\newtheorem{lemma}{Lemma}
\theoremstyle{definition}
\newtheorem{example}{Example}
\theoremstyle{plain}
\title{On greedy heuristics for computing\\D-efficient saturated subsets}
\author{Radoslav Harman\thanks{Faculty of Mathematics, Physics and Informatics, Comenius University, Bratislava\\ email: harman@fmph.uniba.sk}, Samuel Rosa\thanks{Faculty of Mathematics, Physics and Informatics, Comenius University, Bratislava, Slovakia}}
\begin{document}
\maketitle

\begin{abstract}
Let $\Fc$ be a set consisting of $n$ real vectors of dimension $m \leq n$. For any saturated, i.e., $m$-element, subset $\Ss$ of $\Fc$, let $\mathrm{vol}(\Ss)$ be the volume of the parallelotope formed by the vectors of $\Ss$. A set $\Ss^*$ is called a $D$-optimal saturated subset of $\Fc$, if it maximizes $\mathrm{vol}(\Ss)$ among all saturated subsets of $\Fc$. In this paper, we propose two greedy heuristics for the construction of saturated subsets performing well with respect to the criterion of $D$-optimality: an improvement of the method suggested by Galil and Kiefer for the initiation of $D$-optimal experimental design algorithms, and a modification of the Kumar-Yildirim method, the original version of which was proposed for the initiation of the minimum-volume enclosing ellipsoid algorithms. We provide geometric and analytic insights into the two methods, and compare them to the commonly used random and regularized greedy heuristics. We also suggest variants of the greedy methods for a large set $\Fc$, for the construction of $D$-efficient non-saturated subsets, and for alternative optimality criteria.

\textbf{Keywords:} $D$-optimality; Optimal experimental design; Minimum-volume enclosing ellipsoid; Subsampling; Greedy heuristic
\end{abstract}

\section{Introduction}

Consider a set $\Fc$ consisting of vectors $\f_1,\ldots,\f_n \in \Rbb^m$ which we will call ``regressors'', and let $2 \leq m \leq n$. Our aim is to select an $s$-element subset of $\Fc$, $s \leq n$, focusing on the ``saturated'' subsets, that is, $s = m$. The saturated subsets are natural initial solutions of various algorithms applied in statistics and computational geometry; note that in some of the applications the set $\Fc$ is the result of a random process, in others it is a deterministic set obtained by an algebraic construction. Typically $n \gg m$.
\bigskip

To measure the quality of the subsets of $\Fc$, saturated or non-saturated, we consider an ``information-based'' $D$-optimality criterion, which is widely used in the theory of optimal experimental design (e.g., \cite{Fedorov}, \cite{Pazman86}, \cite{puk}, and \cite{AtkinsonEA07}).
\bigskip

 Define the information matrix of $\Ss \subseteq \Fc$ as $\M(\Ss)=\sum_{\f \in \Ss} \f\f'$. If $\Ss=\emptyset$, set $\M(\Ss)=\0_{m \times m}$. Then, the $D$-criterion is
\begin{equation}
 \phi:  2^\Fc \to [0,\infty); \: \phi(\Ss)={\det}^{1/m}(\M(\Ss)).
\end{equation}
The $D$-optimal $s$-element subset $\Ss^*$ is any subset of $\Fc$ that maximizes $\phi(\Ss)$ in the class of all $s$-element subsets of $\Fc$.
\bigskip

To make the problem meaningful, we assume that $\Fc$ is ``non-singular'' in the sense that $\M(\Fc)$ is a non-singular matrix, i.e., the $n \times m$ matrix $\F(\Fc)$ with rows $\f'_i$, $\f_i \in \Fc$, has full column rank. This assumption is clearly equivalent to the existence of a \emph{saturated} non-singular subset of $\Fc$. 
\bigskip

Note that for the case of saturated subsets, we can equivalently define $D$-optimality as follows: A saturated subset $\Ss^*$ is $D$-optimal if and only if it maximizes $|\det(\F(\Ss))|$. Geometrically, a $D$-optimal saturated subset corresponds to the maximum-volume parallelotope formed by an $m$-tuple of vectors from $\Fc$. For a general $s \geq m$, a $D$-optimal $s$-element subset of $\Fc$ maximizes the mean squared volume of the $\binom{s}{m}$ parallelotopes formed by the vectors of $\Ss$. This is a direct consequence of the Cauchy-Binet formula.
\bigskip

In the theory of optimal design of experiments, a $D$-optimal $s$-element subset is called a replication-free $D$-optimal exact experimental design of size $s$ (e.g., \cite{RaschEA}, cf. \cite{Fedorov89})\footnote{For the saturated case, the problem of $D$-optimal exact design with replication and the problem of $D$-optimal exact designs without replications are equivalent.}.  Computing such sets, or designs, is generally a difficult problem of discrete optimization; for a review of optimal design algorithms, see, e.g., Chapter 12 of \cite{AtkinsonEA07}, \cite{GoosJones} and \cite{MandalEA}.
\bigskip

Let $\Xi$ be the set of all probability measures on $\Fc$. Clearly, if we find some
\begin{equation}\label{eq:od}
\xi^* \in \mathrm{argmax}_{\xi \in \Xi} \: {\det}^{1/m} \left( \int_{\Fc} \f\f' d\xi(\f) \right),
\end{equation}
then we obtain the following upper bound on $\phi(\Ss^*)$:  
\begin{equation}\label{eq:ub}
\phi_s^* := s \: {\det}^{1/m} \left( \int_{\Fc} \f\f' d\xi^*(\f) \right).
\end{equation}
The probability measure $\xi^*$ from \eqref{eq:od} is called the $D$-optimal approximate design for the linear regression model with regressors $\Fc$.
The $D$-optimal approximate designs and, consequently, the bounds $\phi^*$, can be computed via a number of specialized algorithms of convex optimization (e.g., \cite{Yu11}, \cite{YangEA}, \cite{HarmanEA19}, cf. Chapter 9 of \cite{PronzatoPazman}).  
\bigskip

The $D$-efficiency of a set $\Ss_1$ relative to a non-singular set $\Ss_2$ is defined as $\phi(\Ss_1)/\phi(\Ss_2)$. If $\Ss^*$ is a $D$-optimal $s$-element subset, then the relative efficiency of an $s$-element subset $\Ss$ with respect to $\Ss^*$ will be denoted by $\mathrm{eff}(\Ss)$. Note that $\phi(\Ss)/\phi^*_s \leq \mathrm{eff}(\Ss) \leq 1$.
\bigskip

The algorithms for $D$-optimal subsets can also be used for various purposes other than constructing maximum-volume (configurations of) parallelepippeds or designs of statistical experiments. For instance, we may be interested in computing the ellipsoid $\mathrm{MVEE}_0(\mathcal{X})$\footnote{MVEE stands for ``Minimum-Volume Enclosing Ellipsoid''.} which has the minimum volume among all ellipsoids covering the set $\mathcal{X}=\{\x_1,\ldots,\x_n\}\subset \Rbb^d$ and centred in $\0_d$, or the ellipsoid $\mathrm{MVEE}(\mathcal{X})$ with the minimum volume among completely all ellipsoids covering $\mathcal{X}$. It turns out that the computation the MVEEs is equivalent to the computation of a $D$-optimal approximate design for the statistical model with regressors $\f_i=\x_i$ in the case of $\mathrm{MVEE}_0(\mathcal{X})$, and for the statistical model with regressors $\f_i=(\x_i',1)'$ in the case of $\mathrm{MVEE}(\mathcal{X})$. For details, see \cite{SilveyTitterington}, \cite{Titterington75}, \cite{Titterington78}, \cite{Ahipasaoglu15MVE}, \cite{Todd16} or the introduction of \cite{HarmanEA19}.  
\bigskip

Another application of $D$-efficient subsets is in the information-based data subselection. Consider a large sample consisting of the points $\x_1, \ldots, \x_n \in \Rbb^d$ (e.g., the vectors of covariates) and let $\f_i = f(\x_i)$, $i=1,\ldots,n$, be the corresponding regressors, where $f$ is an appropriate ``regression function''. Then, one can select an informative subsample of size $s$ by choosing $s$ regressors $\f_i$ (and the corresponding points $\x_i$) that make up a $D$-optimal or $D$-efficient subset (e.g., \cite{DrovandiEA} and \cite{WangEA}).
\bigskip

To compute an information-based subsample, \cite{WangEA} provides a fast specialized deterministic algorithm, primarily for sizes $s=2kd$, $k \in \mathbb{N}$. Note, however, that the algorithm, although fast and potentially very useful for big data, produces subsets that are not saturated, and generally not particularly $D$-efficient, in some cases even singular. The authors of \cite{DrovandiEA} also construct a subsample motivated by criteria used in experimental design, by solving two optimization problems: the first one on a possibly continuous superset of $\Fc$, and then second one on $\Fc$.  
\bigskip

As shown above, there is a large number of algorithms for constructing $D$-optimal or $D$-efficient subsets of a general size. However, these algorithms often require a small and non-singular initiation subset, preferably as efficient as possible. In particular, the saturated subsets can be used to initiate algorithms for both optimal \emph{approximate} designs and optimal \emph{exact} designs. In many such algorithms, such as \cite{Yu11}, \cite{YangEA}, and \cite{AtkinsonEA07}, the the initial solution suggested by the authors is either random or based on a regularized greedy heuristic (cf.  Subsections \ref{subsect:RND} and \ref{subsect:RGH}), which may not be the ideal method. It was the main motivation of this paper to analyse and compare alternative methods to generate such initial subsets.

\section{Greedy heuristics for saturated subsets}

A general form of the greedy procedure for the construction of saturated subsets is the following:
\smallskip

\noindent\textbf{Scheme of greedy heuristics for saturated subsets}\\
1. Set $j \leftarrow 0$ and $\Ss_0 \leftarrow \emptyset$\\
2. Select $\f_{(j)}$ from $\Fc \setminus \Ss_j$\:\: \textbf{(S)}\\
3. Set $\Ss_{j+1} \leftarrow \Ss_j \cup \{\f_{(j)}\}$\\
5. If $j+1<m$ set $j \leftarrow j+1$ and return to 2\\
6. Output $\Ss_m$
\bigskip

The main difficulty for the greedy construction of saturated subsets is that during the process of ``saturation'' we work with subsets with fewer than $m$ elements, and such subsets $\Ss$ are necessarily singular. That is, $\phi(\Ss)=0$ and the quality of such subsets cannot be directly compared via $\phi$. In particular, the general greedy heuristic, which in each step adds the regressor $\f_{(j)}$ that provides the highest increase in the objective function (e.g., \cite{NemhauserEA}) cannot be used. In \cite{Sagnol13}, the author circumvented this deficiency by considering optimality criteria that are not constantly zero for singular subsets, but we will not use this approach.
\bigskip

The methods described next solve the saturated subset problem in different ways, leading to various specifications of the critical step 2 which we call a saturation step and denote it by (S).  
\bigskip

The heuristic outlined above is sometimes called a ``forward'' method. However, there also exist ``drop'' or ``backward'' methods (e.g., \cite{RaschEA} and \cite{AtkinsonEA07}, Section 12.4), where we begin with the full set $\Fc$ and delete the least informative regressors one by one. These, however, turn out to be slow, because their asymptotic complexity is $O(n^2 - nm)$ and in applications $n$ is usually a large number.

\subsection{Random subsampling}\label{subsect:RND}

The simplest method for obtaining a saturated subset is to select $m$ regressors uniformly at random. Formally, in step (S) we select a random $\f_{(j)}$ from $\Fc \setminus \Ss_j$. This method is extremely fast, but the random saturated subsets tend to have low efficiency. It is even possible that such a saturated $\Ss$ is singular. The case of a singular random subset is not only possible in theory, but also in applications, as shown in Example \ref{exSingularRandom}.

\begin{example}\label{exSingularRandom}
	Consider the classical factorial linear regression model with $m$ factors, each with levels $-1,1$, with main factors and no constant term. In this situation the set of regressors is $\Fc = \{-1,1\}^m$. Suppose that we intend to construct a saturated experimental design at random. Then, the probability of obtaining a singular subset (i.e., a singular design) can be undesirably high. For example, for $m=3, 4, 5$, and $6$ this probability is $\frac{3}{7} \approx 0.429, \frac{223}{455} \approx 0.490, \frac{ 3285}{6293} \approx 0.522$, and $\frac{175795}{334707} \approx 0.525$.
\end{example}

To obtain more efficient subsets, methods for nonuniform random subsampling were proposed. In particular, the leverage-based random subsampling (e.g., \cite{MaEA15} and \cite{MaSun}) selects $\f_{(j)}=\f_i$ from $\Fc \setminus \Ss_j$ with probability $\pi_i$ that is proportional to its so-called leverage score $\f_i' \M^{-1}(\Fc) \f_i$.\footnote{Leverages based on subsets of $\Fc$, called sensitivities in optimal design of experiments (e.g., \cite{Fedorov}) naturally occur in most $D$-optimum design algorithms because they have an important analytic and statistical interpretation.} Note that a non-uniform subsampling can be much more computationally demanding than the uniform sampling. 
 Moreover, the leveraging methods are also not guaranteed to produce a non-singular subset, especially if one seeks a saturated subset: Example \ref{exSingularLev} demonstrates this.

\begin{example}\label{exSingularLev}
	In Example \ref{exSingularRandom}, the information matrix $\M(\Fc)$ is proportional to $\I_m$ and $\f_i'\f_i = m$ for all $i$. Therefore, all regressors have the same leverage score which means that the leveraging method is equivalent to the uniform random subset selection, including the probabilities of generating a singular subset as listed in Example \ref{exSingularRandom}.
\end{example}

A non-zero probability of generating a singular subset for a given $\Fc$ is not necessarily a damaging feature\footnote{Unless, of course, the probability of generating a singular subset is $1$ or very close to $1$.}; we can simply repeatedly use the method until we hit a non-singular subset. See Subsection \ref{subsect:multi} for further discussion on this ``multi-start'' approach.

\subsection{The regularized greedy heuristic (RGH)}\label{subsect:RGH}

Except for the simple or the weighted random choice of a saturated set $\Ss \subseteq \Fc$, the most common is the procedure based on a regularization of the information matrix of a sub-saturated set (see, e.g., Section 11.2 of \cite{AtkinsonEA07}; cf. \cite{Vuchkov}). The regularized greedy heuristic (RGH) specifies step (S) as
\begin{equation}\label{eq:RGH}
 \f_{(j)} \in \mathrm{argmax}_{\f \in \Fc\setminus\Ss_j}\phi(\M(\Ss_j) + \f\f' + \delta \I_m).
\end{equation}
Here, $\delta$ is a small positive constant; in the context of experimental design, \cite{AtkinsonEA07} suggest to take $\delta$ between $10^{-4}$ and $10^{-6}$. The matrix determinant lemma directly implies that \eqref{eq:RGH} is equivalent to 
\begin{equation}\label{eq:RGHvar}
  \f_{(j)} \in \mathrm{argmax}_{\f \in \Fc\setminus\Ss_j}\f'(\M(\Ss_j) + \delta \I_m)^{-1} \f.
\end{equation}

A short \texttt{R}-implementation of the RGH based on an efficient computation of \eqref{eq:RGHvar} is given in Appendix. The asymptotic complexity of this algorithm is $O(nm^3)$.
\bigskip

The numerical experience shows that the RGH usually finds a reasonably $D$-efficient saturated subset $\Ss$. However, a little known fact is that, similarly to the random methods, the RGH is not guaranteed to provide a non-singular saturated subset, even if one exists: 

\begin{example}
Consider the RGH with $\delta=10^{-4}$ for $\f_a = (1,0,0)'$, $\f_b=(0,1,0)'$, $\f_c=(1,1,0)'$ and $\f_d=(0,0,10^{-5})'$. Then, RGH first chooses $\f_c$, then $\f_a$ (or $\f_b$) and then $\f_b$ (or $\f_a$). The resulting subset $\Ss=\{\f_a,\f_b,\f_c\}$ is singular, although a non-singular subset $\{\f_a,\f_b,\f_d\}$ of size $3$ does exist.
\end{example} 

That is, we cannot fully rely on the non-singularity of the design produced by RGH. Moreover, unlike the random methods, the RGH is deterministic (cf. Subsection \ref{subsect:multi}), therefore it has no meaning to run it multiple times in search for an improvement. On the other hand, RGH can be also applied to criteria other than $D$-optimality in a straightforward manner. We will show in the following sections that to construct a $D$-efficient saturated subset, it may be preferable to use modifications of the original Galil-Kiefer and the Kumar-Yildirim methods.

\subsection{The Galil-Kiefer method (GKM)}

In \cite{GalilKiefer} (see also \cite{AtkinsonEA07}, Section 12.4) Galil and Kiefer proposed to specify step (S) as 
\begin{equation}\label{eq:GKM}
 \f_{(j)} \in \mathrm{argmax}_{\f \in \Fc\setminus\Ss_j}\prod_{i=1}^{j+1} \lambda_i(\M(\Ss_j) + \f\f'),
\end{equation}
where $\lambda_1(\M)\geq \cdots \geq \lambda_m(\M)$ denote the eigenvalues of a non-negative definite $m \times m$ matrix $\M$. That is, instead of maximizing the product of all eigenvalues $\phi(\Ss)$, which is zero, we maximize the product of the $j+1$ largest eigenvalues of $\M(\Ss)$, which can be non-zero. Let $\Ss \subseteq \Fc$ be an $r$-element set. Using the well known fact (e.g. 6.54 of \cite{Seber}) that the $r$ largest eigenvalues of $\M(\Ss)=\F'(\Ss)\F(\Ss)$ are equal to the $r$ eigenvalues of $\F(\Ss)\F'(\Ss)$, we see that \eqref{eq:GKM} is equivalent to 
\begin{equation}\label{eg:GKMdet}
 \f_{(j)} \in \mathrm{argmax}_{\f \in \Fc\setminus\Ss_j}\det\big(\F(\Ss_j \cup \f)\F'(\Ss_j \cup \f)\big),
\end{equation}
which was also noted in \cite{GalilKiefer}. Formula \eqref{eg:GKMdet} replaces the computation of eigenvalues by the simpler computation of determinants. 
\bigskip

Although the approach of Galil and Kiefer appears to be reasonable, it seems that it has not been much analysed in the literature and used in applications. This is probably caused by the fact that the determinant form \eqref{eg:GKMdet} is still computationally expensive. The paper \cite{GalilKiefer} does not provide a more efficient solution for the update of $\Ss_j$; the authors simply state that the standard formulas can be ``implemented with appropriate changes''.
\bigskip 

In the sequel, we show a more efficient and geometrically interpretable form of the optimization step. The basic insight is that the determinant form of the GKM has a simple geometric interpretation - we choose $\f_{(j)}$ to maximize the $j$-dimensional volume of the parallelotope spanned in $\Rbb^m$ by the  vectors of $\Ss_j \cup \f$, $\f \in \Fc \setminus \Ss_j$. Alternatively, we can formulate it as follows.

\begin{theorem}\label{tEquiv}
	The GKM always produces a non-singular saturated subset of $\Fc$ if such a subset exists. Moreover, the optimization \eqref{eq:GKM} (or \eqref{eg:GKMdet}) is equivalent to
	\begin{equation}\label{eg:GKMproj}
 \f_{(j)} \in \mathrm{argmax}_{\f \in \Fc\setminus\Ss_j} \f'(\I_m-\Pb(\Ss_j))\f,
\end{equation}\label{eq:GKMproj}
	where $\Pb(\Ss_j)$ is the projector on the column space of $\M(\Ss_j)$.
\end{theorem}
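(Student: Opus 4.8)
The plan is to prove the two assertions of the theorem simultaneously by induction on $j$, the inductive hypothesis being that the set $\Ss_j$ produced by the GKM after $j$ steps is linearly independent. Under this hypothesis the equivalence of \eqref{eq:GKM} with \eqref{eg:GKMproj} at step $j$ drops out of a base‑times‑height decomposition of the parallelotope volume, and that equivalence in turn forces $\Ss_{j+1}$ to remain linearly independent, so both claims propagate together.

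First I would record the geometric identity behind everything. For a linearly independent $\Ss_j\subseteq\Fc$ and any $\f\in\Rbb^m$,
\begin{equation*}
\det\!\big(\F(\Ss_j\cup\f)\,\F'(\Ss_j\cup\f)\big)=\det\!\big(\F(\Ss_j)\,\F'(\Ss_j)\big)\cdot\f'(\I_m-\Pb(\Ss_j))\f,
\end{equation*}
where, as in the statement, $\Pb(\Ss_j)$ is the orthogonal projector onto the column space of $\M(\Ss_j)$, which equals $\mathrm{span}(\Ss_j)$. This is the familiar ``base times height'' decomposition of the volume: $\f'(\I_m-\Pb(\Ss_j))\f=\|(\I_m-\Pb(\Ss_j))\f\|^2$ is the squared distance from $\f$ to $\mathrm{span}(\Ss_j)$, so the right‑hand side is the squared $j$-volume of the parallelotope on $\Ss_j$ times the squared height of $\f$ above its span. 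I would obtain it by writing the $(j+1)\times(j+1)$ Gram matrix of $\Ss_j\cup\{\f\}$ in block form and taking the Schur complement of the block $\F(\Ss_j)\F'(\Ss_j)$, using $\F'(\Ss_j)\big(\F(\Ss_j)\F'(\Ss_j)\big)^{-1}\F(\Ss_j)=\Pb(\Ss_j)$; alternatively one can simply cite the classical Gram‑determinant formula. Since the paper already notes that \eqref{eq:GKM} and \eqref{eg:GKMdet} are equivalent regardless of singularity, and since $\det(\F(\Ss_j)\F'(\Ss_j))$ is a positive constant independent of $\f$ whenever $\Ss_j$ is linearly independent, the identity shows that maximizing \eqref{eg:GKMdet} over $\f\in\Fc\setminus\Ss_j$ is the same as maximizing $\f'(\I_m-\Pb(\Ss_j))\f$, which is the equivalence claimed at step $j$.

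Next comes the induction establishing that $\Ss_j$ stays linearly independent (hence that every step is of the above form). For $j=0$ we have $\Ss_0=\emptyset$, $\M(\Ss_0)=\0$, $\Pb(\Ss_0)=\0$, so \eqref{eg:GKMproj} selects $\f_{(0)}\in\mathrm{argmax}_{\f}\f'\f$; non‑singularity of $\Fc$ forces some regressor to be nonzero, hence $\f_{(0)}\neq\0$ and $\Ss_1$ is linearly independent. For the inductive step, assume $\Ss_j$ is linearly independent with $j<m$ and that $\Fc$ possesses a saturated non‑singular subset. Then $\mathrm{span}(\Fc)=\Rbb^m$, so $\Fc$ cannot lie inside the $j$-dimensional subspace $\mathrm{span}(\Ss_j)$; choosing any $\f\in\Fc$ outside it gives $\f'(\I_m-\Pb(\Ss_j))\f>0$. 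By the equivalence just proved, the GKM choice $\f_{(j)}$ maximizes this quadratic form, so $\f_{(j)}'(\I_m-\Pb(\Ss_j))\f_{(j)}>0$, i.e. $\f_{(j)}\notin\mathrm{span}(\Ss_j)$, and $\Ss_{j+1}$ is linearly independent. Running the induction to $j=m$ yields both the equivalence at every step and the non‑singularity of the output $\Ss_m$.

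The one point requiring care — and essentially the only obstacle — is that the two assertions cannot be decoupled: the clean equivalence of \eqref{eq:GKM} with \eqref{eg:GKMproj} needs $\det(\F(\Ss_j)\F'(\Ss_j))\neq0$, while the non‑singularity of $\Ss_{j+1}$ is exactly what the projection form \eqref{eg:GKMproj} makes transparent, so both must be threaded through a single induction. The volume factorization itself is standard, to be dispatched by the one‑line Schur‑complement computation above (or a citation), and the ``there is a regressor outside $\mathrm{span}(\Ss_j)$'' step is immediate from the assumed existence of a saturated non‑singular subset.
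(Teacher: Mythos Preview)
Your proof is correct and follows essentially the same route as the paper's: both arguments reduce the $(j{+}1)\times(j{+}1)$ Gram determinant via the block determinant (Schur complement) formula to $\det(\F(\Ss_j)\F'(\Ss_j))\cdot\f'(\I_m-\Pb(\Ss_j))\f$, identify $\F'(\Ss_j)(\F(\Ss_j)\F'(\Ss_j))^{-1}\F(\Ss_j)$ with $\Pb(\Ss_j)$, and thread non-singularity of $\Ss_j$ through an induction. Your write-up is somewhat more explicit about why the equivalence and the non-singularity claim must be established simultaneously, but the mathematical content is the same.
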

\begin{proof}
	If $\f \in \Cc(\F'(\Ss_j))$ then $\F(\Ss_j \cup \f)\F'(\Ss_j \cup \f)$ is not of full rank and its determinant is zero; here $\Cc$ denotes the column space. It follows that the GKM always chooses $\f \not\in \Cc(\F'(\Ss_j))$ and by induction, we obtain that the rank of $\M(\Ss_j)$ is $j$ for $j=1,\ldots,m$. Therefore, $\M(\Ss_j)$ can be expressed as $\V\V'$, where $\V$ is an $m \times j$ matrix of rank $j$. The positive eigenvalues of $\M(\Ss_j) + \f\f' = (\V,\f)(\V,\f)'$ are equal to the positive eigenvalues of $(\V,\f)' (\V,\f)$. That is,
	\begin{equation*}
	  \prod_{i=1}^{j+1} \lambda_i(\M(\Ss_j) + \f\f') = \det\begin{bmatrix}
	\V'\V & \V'\f \\ \f'\V & \f'\f
	\end{bmatrix},
   \end{equation*}
	which is equal to $(\f'\f - \f'\V(\V'\V)^{-1}\V'\f) \det(\V'\V)$ by the block determinant formula (e.g. \cite{Harville}, Theorem 13.3.8). Moreover, $\det(\V'\V)$ does not depend on $\f$ and $\V(\V'\V)^{-1}\V'=\Pb(\Ss_j)$.
\end{proof}

As a side result, we obtained that the GKM is guaranteed to produce a non-singular subset.

Clearly,
\begin{equation}
  \f'(\I_m-\Pb(\Ss_j))\f=\lVert (\I_m-\Pb(\Ss_j))\f \rVert^2,
\end{equation}
therefore Theorem \ref{tEquiv} implies that the algebraic approach based on the positive eigenvalues of the information matrix is geometrically a successive projection method: it projects the regressors to the orthogonal complement of $\Cc(\M(\Ss_j))$ and then finds the projected regressor of the largest norm. In an intuitive sense, such a regressor provides the most information not already captured by $\Ss_j$. 
\bigskip

Another observation is that the GKM is, roughly stated, a limit of the RGH as $\delta$ tends to zero. First, we need a technical lemma.

\begin{lemma}\label{lMaxFormula}
	Let $j \geq 2$, $\F=\F(\Ss_j)$, $\delta>0$. The set on the right hand side of \eqref{eq:RGH} is equal to 
	\begin{equation}\label{eq:RGHalt}
	 \mathrm{argmax}_{\f \in \Fc\setminus\Ss_j}\f'(\I_m-\F'(\F\F'+\delta\I_{j})^{-1}\F)\f.
	 \end{equation}
\end{lemma}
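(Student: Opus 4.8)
The plan is to reduce the determinant maximization on the right-hand side of \eqref{eq:RGH} to a quadratic form — which the paper has already done in passing to \eqref{eq:RGHvar} — and then to rewrite the $m \times m$ inverse appearing there by the Woodbury identity so that it involves the smaller $j \times j$ matrix $\F\F'$ instead of $\M(\Ss_j) = \F'\F$.

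First I would recall \eqref{eq:RGHvar}: since the matrix determinant lemma gives $\det(\M(\Ss_j)+\f\f'+\delta\I_m) = (1 + \f'(\M(\Ss_j)+\delta\I_m)^{-1}\f)\det(\M(\Ss_j)+\delta\I_m)$ and the trailing factor does not depend on $\f$, the set on the right-hand side of \eqref{eq:RGH} equals $\mathrm{argmax}_{\f\in\Fc\setminus\Ss_j}\f'(\M(\Ss_j)+\delta\I_m)^{-1}\f$. It remains to identify this quadratic form, up to a positive constant, with the one in \eqref{eq:RGHalt}.

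To that end, write $\M(\Ss_j)=\F'\F$ with $\F=\F(\Ss_j)$ of size $j\times m$, and apply the Sherman--Morrison--Woodbury formula with $A=\delta\I_m$, $U=\F'$, $C=\I_j$, $V=\F$:
\[
(\delta\I_m+\F'\F)^{-1}=\tfrac1\delta\I_m-\tfrac1{\delta^2}\F'\!\left(\I_j+\tfrac1\delta\F\F'\right)^{-1}\!\F=\tfrac1\delta\left(\I_m-\F'(\F\F'+\delta\I_j)^{-1}\F\right),
\]
where the last step uses $(\I_j+\tfrac1\delta\F\F')^{-1}=\delta(\F\F'+\delta\I_j)^{-1}$. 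Here $\F\F'+\delta\I_j$ is invertible for every $\delta>0$, so no rank or non-singularity assumption on $\Ss_j$ is required. Substituting this into the quadratic form gives $\f'(\M(\Ss_j)+\delta\I_m)^{-1}\f=\tfrac1\delta\,\f'\big(\I_m-\F'(\F\F'+\delta\I_j)^{-1}\F\big)\f$, and since the factor $1/\delta>0$ is constant in $\f$, the argmax is unchanged; this is exactly \eqref{eq:RGHalt}.

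I do not expect a real obstacle here: the proof is a routine application of Woodbury, and the only points needing a little care are the bookkeeping of the scalar $\delta$ inside the identity and the remark that $(\F\F'+\delta\I_j)^{-1}$ exists unconditionally. In particular the hypothesis $j\ge 2$ plays no role in the argument itself — it is relevant only to how the lemma is invoked later — so one could equally state the identity for all $j\ge 1$.
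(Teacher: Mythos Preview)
Your proof is correct. It is, however, a genuinely different route from the paper's. The paper does not pass through \eqref{eq:RGHvar}; instead it works directly with the determinant in \eqref{eq:RGH}, observing that the eigenvalues of $\M(\Ss_j)+\f\f'+\delta\I_m$ are $\delta$ shifted from those of $\F'\F+\f\f'$, and that the nonzero eigenvalues of the latter coincide with those of the $(j+1)\times(j+1)$ matrix $(\F',\f)'(\F',\f)$. This reduces $\det(\M(\Ss_j)+\f\f'+\delta\I_m)$ to $\delta^{m-j-1}\det\bigl(\begin{smallmatrix}\F\F'&\F\f\\ \f'\F'&\f'\f\end{smallmatrix}+\delta\I_{j+1}\bigr)$, after which the block determinant formula (Schur complement) yields the quadratic form in \eqref{eq:RGHalt}. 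Your argument is shorter and more mechanical: it simply chains the matrix determinant lemma (already invoked by the paper for \eqref{eq:RGHvar}) with Woodbury to flip the $m\times m$ inverse to a $j\times j$ one. The paper's eigenvalue reduction, while longer, mirrors the spectral viewpoint used in the surrounding GKM analysis and makes the dimension drop from $m$ to $j+1$ explicit without invoking Woodbury. Your remark that $j\ge2$ is inessential is accurate for both arguments.
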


\begin{proof}
	Let $\M=\M(\Ss_j)$. Observe that $\lambda_i(\M + \f\f' + \delta \I_m) = \delta + \lambda_i(\F'\F + \f \f')$. The non-zero eigenvalues of $\F'\F + \f \f'$ coincide with the non-zero eigenvalues of $(\F',\f)'(\F',\f)$. It follows that $\det(\M + \f\f' + \delta \I_m)$ is equal to $\delta^{m-j-1}\prod_{i=1}^{j+1}(\delta+\lambda_i((\F',\f)'(\F',\f)))$, which can be expressed as
\begin{equation*}
	\delta^{m-j-1}\det(\begin{bmatrix}
	\F\F' & \F\f \\ \f'\F' & \f'\f
	\end{bmatrix}+\delta\I_{j+1}).
\end{equation*}
	The term $\delta^{m-j-1}$ can be ignored in the maximization, and using the formula for determinant of a block matrix, \eqref{eq:RGHalt} is equivalent to $\mathrm{argmax}_{\f \in \Fc\setminus\Ss_j} \det(\F\F' + \delta \I_{j})(\f'\f + \delta - \f'\F'(\F\F'+\delta \I_{j})^{-1}\F\f)$.
 Since the first term of the expression does not depend on $\f$, and the additive term $+\delta$ does not affect the maximization, we obtain the desired result.
\end{proof}

\begin{theorem}\label{tLimit}
	The GKM is the limit of the RGH in the sense that for any finite set $\mathcal{G} \subset \Rbb^m$, an $r \times m$ matrix $\F$, an $m \times m$ matrix $\M=\F'\F$, and the orthogonal projector $\Pb$ on $\mathcal{C}(\M)$ there is some $\delta_*>0$ such that 
	\begin{equation}\label{eq:rghsubgkm}
	\mathrm{argmax}_{\f \in \mathcal{G}}\phi(\M + \delta \I_m + \f \f') \subseteq \mathrm{argmax}_{\f \in \mathcal{G}} \f' (\I_m - \Pb) \f
	\end{equation}
	for all positive $\delta<\delta_*$.
\end{theorem}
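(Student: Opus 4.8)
The plan is to reduce the left-hand side of \eqref{eq:rghsubgkm} to a single quadratic form in $\f$, identify its pointwise limit as $\delta\to0^+$, and then apply the elementary fact that maximizers over a \emph{finite} set of a family of functions converging pointwise must, for small enough $\delta$, lie among the maximizers of the limiting function. First I would rewrite the left-hand side: by the matrix determinant lemma, for every fixed $\delta>0$ and every $\f$,
\begin{equation*}
\det(\M + \delta\I_m + \f\f') = \det(\M + \delta\I_m)\,\bigl(1 + \f'(\M + \delta\I_m)^{-1}\f\bigr),
\end{equation*}
and since $\det(\M+\delta\I_m)>0$ does not depend on $\f$ and $t\mapsto t^{1/m}$ is increasing, $\f\mapsto\phi(\M+\delta\I_m+\f\f')$ is maximized over $\mathcal{G}$ exactly where $\f'(\M+\delta\I_m)^{-1}\f$ is (this is the same equivalence as between \eqref{eq:RGH} and \eqref{eq:RGHvar}; one could also invoke Lemma~\ref{lMaxFormula} at the cost of separately treating $\rank(\F)\le1$). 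Multiplying this objective by the positive constant $\delta$ does not change the argmax, so $\mathrm{argmax}_{\f\in\mathcal{G}}\phi(\M + \delta\I_m + \f\f') = \mathrm{argmax}_{\f\in\mathcal{G}} g_\delta(\f)$, where $g_\delta(\f):=\delta\,\f'(\M+\delta\I_m)^{-1}\f$.

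Next I would identify the limit of $g_\delta$. Fixing a spectral decomposition $\M=\sum_{i=1}^m\mu_i\w_i\w_i'$ with $\mu_i\ge0$ and $\{\w_i\}$ orthonormal, one has $\delta(\M+\delta\I_m)^{-1}=\sum_{i=1}^m\frac{\delta}{\mu_i+\delta}\w_i\w_i'$; since $\frac{\delta}{\mu_i+\delta}\to0$ for $\mu_i>0$ and equals $1$ for $\mu_i=0$, this matrix converges as $\delta\to0^+$ to $\sum_{i:\,\mu_i=0}\w_i\w_i'=\I_m-\Pb$, the orthogonal projector onto $\mathcal{C}(\M)^\perp$. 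Hence $g_\delta(\f)\to g_0(\f):=\f'(\I_m-\Pb)\f$ for every $\f$, and in particular uniformly on the finite set $\mathcal{G}$.

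Finally I would run the ``argmax stability'' step. Put $g_0^*:=\max_{\f\in\mathcal{G}}g_0(\f)$ and $\mathcal{G}^*:=\{\f\in\mathcal{G}:g_0(\f)=g_0^*\}$. If $\mathcal{G}^*=\mathcal{G}$ the inclusion \eqref{eq:rghsubgkm} holds trivially for every $\delta$; otherwise set $\eta:=g_0^*-\max_{\f\in\mathcal{G}\setminus\mathcal{G}^*}g_0(\f)>0$ and pick $\delta_*>0$ with $|g_\delta(\f)-g_0(\f)|<\eta/2$ for all $\f\in\mathcal{G}$ and all $0<\delta<\delta_*$. For such $\delta$, every $\f\in\mathcal{G}^*$ has $g_\delta(\f)>g_0^*-\eta/2$ while every $\f\in\mathcal{G}\setminus\mathcal{G}^*$ has $g_\delta(\f)<g_0^*-\eta/2$, so $\mathrm{argmax}_{\f\in\mathcal{G}}g_\delta(\f)\subseteq\mathcal{G}^*=\mathrm{argmax}_{\f\in\mathcal{G}}g_0(\f)$; combined with the first step this gives \eqref{eq:rghsubgkm}. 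I do not expect a genuine obstacle: the only points needing care are the legitimacy of dividing out $\det(\M+\delta\I_m)$ and rescaling by $\delta$, and the limit identity $\delta(\M+\delta\I_m)^{-1}\to\I_m-\Pb$, which is really the whole content; the concluding step is routine.
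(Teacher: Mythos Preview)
Your proof is correct and follows the same overall architecture as the paper's---reduce the $\phi$-argmax to the argmax of a $\delta$-dependent quadratic form, identify its pointwise limit as $\f'(\I_m-\Pb)\f$, and use finiteness of $\mathcal{G}$ plus a gap argument to conclude---but the intermediate technical route differs. The paper invokes Lemma~\ref{lMaxFormula} to rewrite the RGH objective as $\f'(\I_m-\F'(\F\F'+\delta\I_r)^{-1}\F)\f$ and then passes to the limit using the pseudoinverse identity $\A^+=\lim_{\delta\to0^+}(\A'\A+\delta\I)^{-1}\A'$ together with $\F'(\F')^+=\Pb$. You instead apply the matrix determinant lemma directly to get $\f'(\M+\delta\I_m)^{-1}\f$, rescale by $\delta$, and compute the matrix limit $\delta(\M+\delta\I_m)^{-1}\to\I_m-\Pb$ via a spectral decomposition of $\M$.

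Your route is somewhat more elementary and self-contained: it bypasses Lemma~\ref{lMaxFormula} entirely (in particular avoiding that lemma's restriction $j\ge 2$, which you correctly flag) and needs no pseudoinverse machinery. The paper's route, on the other hand, keeps the argument tethered to the alternative RGH formulation \eqref{eq:RGHalt} already established, which makes the connection between Lemma~\ref{lMaxFormula} and Theorem~\ref{tLimit} explicit. Both limits are straightforward to verify, and the concluding ``argmax stability'' step is identical in substance (the paper's $\epsilon$ plays the role of your $\eta$).
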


\begin{proof}
  Let $h:=\max_{\f \in \mathcal{G}} \f' (\I_m - \Pb) \f$ and let $\epsilon>0$ be such that
  \begin{equation*}
  \mathrm{argmax}_{\f \in \mathcal{G}} \f' (\I_m - \Pb) \f = \{\f \in \mathcal{G}:  \f' (\I_m - \Pb) \f > h - \epsilon\}.
  \end{equation*}
  For any matrix $\A$, the pseudoinverse satisfies $\A^+ = \lim_{\delta\to 0^+} (\A'\A + \delta \I)^{-1}\A'$ (e.g. \cite{Harville}, Theorem 20.7.1) and $\F'(\F')^+=\Pb$, therefore
  \begin{equation*}
   \lim_{\delta\to 0^+}\f'(\I_m-\F'(\F\F'+\delta\I_r)^{-1}\F)\f = \f' (\I_m - \Pb) \f
   \end{equation*}
   for any $\f$. Since $\mathcal{G}$ is finite, there is some $\delta_*$ such that
   \begin{equation*}
   |\f'(\I_m-\F'(\F\F'+\delta\I_r)^{-1}\F)\f - \f' (\I_m - \Pb) \f| < \epsilon/2
   \end{equation*}
   for all $\delta < \delta_*$ and all $\f \in \mathcal{G}$. Let $\delta<\delta_*$ and let $\f_* \in \mathcal{G}$ be such that $\f_* \notin \mathrm{argmax}_{\f \in \mathcal{G}} \f' (\I_m - \Pb) \f$. Then $\f_*' (\I_m - \Pb) \f_* \leq h-\epsilon$, therefore $\f_*'(\I_m-\F'(\F\F'+\delta\I_r)^{-1}\F)\f_* < h - \epsilon/2$. It follows that $\f_* \notin \mathrm{argmax}_{\f \in \mathcal{G}} \f'(\I_m-\F'(\F\F'+\delta\I_r)^{-1}\F)\f_*$, i.e., $\f_* \notin \mathrm{argmax}_{\f \in \mathcal{G}} \phi(\M + \delta \I_m + \f \f')$ by the previous lemma.
\end{proof}

If the regressors $\f$ of $\mathcal{G}$ are the result of iid sampling from a continuous $m$-dimensional  distribution, then the set on the RHS of \eqref{eq:rghsubgkm} is a singleton with probability $1$, i.e., both sides of \eqref{eq:rghsubgkm} coincide. However, for a symmetric $\mathcal{G}$, the right hand side can be a large set, and the solutions of the RGH can form a strict subset of those of GKM for any $\delta>0$; see Section \ref{sec:num}, specifically panels (a), (b), (d) in Figure \ref{fTimeEffcube}.
\bigskip 

The GKM is not only guaranteed to produce a non-singular saturated subset, it is guaranteed to produce a subset with efficiency at least $1/m$, as we show next. First, a technical lemma.

\begin{lemma}\label{lEff}
	Let $\f_1, \ldots, \f_m \in \Rbb^m$. Then, $\det(\sum_j \f_j \f_j') = \det(\sum_j \tilde{\f}_j\tilde{\f}_j')$, where $\tilde{\f}_1 = \f_1$ and $\tilde{\f}_j$ for $j>1$ can be defined by any of the three equivalent formulas:
	\begin{align}
	\tilde{\f}_j
	&= (\I - \Pb_{1:(j-1)})\f_j \label{eq:tildeF1} \\
	&= (\I - \tilde{\Pb}_{j-1} -  \ldots - \tilde{\Pb}_{1})\f_j, \label{eq:tildeF2} \\
	&= (\I-\tilde{\Pb}_{j-1}) \cdots (\I - \tilde{\Pb}_{1})\f_j \label{eq:tildeF3}.
	\end{align}
	In the expressions above $\Pb_{1:(j-1)}$ is the projection matrix to $\Cc(\f_1, \ldots, \f_{j-1})$ and $\tilde{\Pb}_{i}$ is the projection matrix to $\Cc(\tilde{\f}_i)$.
\end{lemma}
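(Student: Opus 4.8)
The content of the lemma is that the $\tilde\f_j$ are the Gram--Schmidt orthogonalisation of $\f_1,\dots,\f_m$ and that this orthogonalisation leaves the Gram determinant unchanged; the plan is to make both of these precise, with one induction handling the equivalence of \eqref{eq:tildeF1}--\eqref{eq:tildeF3} and then a short matrix argument for the determinant.

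First I would prove, by induction on $j$, that the three formulas produce one and the same vector $\tilde\f_j$, that $\tilde\f_1,\dots,\tilde\f_j$ are pairwise orthogonal, and that $\Cc(\tilde\f_1,\dots,\tilde\f_j)=\Cc(\f_1,\dots,\f_j)$ --- adopting throughout the convention that the orthogonal projector onto $\{\0\}$ is $\0$, so that the degenerate, linearly dependent case is included. The case $j=1$ is immediate. For the inductive step, pairwise orthogonality of $\tilde\f_1,\dots,\tilde\f_{j-1}$ makes $\tilde\Pb_1+\dots+\tilde\Pb_{j-1}$ the orthogonal projector onto $\Cc(\tilde\f_1,\dots,\tilde\f_{j-1})$, which by the inductive hypothesis equals $\Cc(\f_1,\dots,\f_{j-1})$, hence equals $\Pb_{1:(j-1)}$; this identifies \eqref{eq:tildeF1} with \eqref{eq:tildeF2}. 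Pairwise orthogonality also gives $\tilde\Pb_i\tilde\Pb_k=\0$ for $i\neq k$, so when the product $(\I-\tilde\Pb_{j-1})\cdots(\I-\tilde\Pb_1)$ is expanded every term containing two or more distinct projectors vanishes and it collapses to $\I-(\tilde\Pb_1+\dots+\tilde\Pb_{j-1})$; this identifies \eqref{eq:tildeF3} with \eqref{eq:tildeF2}. Writing $\tilde\f_j$ for the common value, \eqref{eq:tildeF1} shows $\tilde\f_j\perp\Cc(\f_1,\dots,\f_{j-1})=\Cc(\tilde\f_1,\dots,\tilde\f_{j-1})$, so $\tilde\f_1,\dots,\tilde\f_j$ are pairwise orthogonal; and since $\f_j-\tilde\f_j=\Pb_{1:(j-1)}\f_j\in\Cc(\f_1,\dots,\f_{j-1})$, the inclusions $\tilde\f_j\in\Cc(\f_1,\dots,\f_j)$ and $\f_j\in\Cc(\tilde\f_1,\dots,\tilde\f_j)$ both hold, so the two spans agree and the induction closes.

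For the determinant identity, I would collect the vectors as the columns of $m\times m$ matrices $\A=[\f_1,\dots,\f_m]$ and $\tilde\A=[\tilde\f_1,\dots,\tilde\f_m]$, so that $\sum_j\f_j\f_j'=\A\A'$ and $\sum_j\tilde\f_j\tilde\f_j'=\tilde\A\tilde\A'$. By \eqref{eq:tildeF1}, $\tilde\f_j=\f_j-\Pb_{1:(j-1)}\f_j$ with $\Pb_{1:(j-1)}\f_j\in\Cc(\f_1,\dots,\f_{j-1})$, so each $\tilde\f_j$ equals $\f_j$ plus a linear combination of $\f_1,\dots,\f_{j-1}$; hence $\tilde\A=\A\,\mathbf U$ for some unit upper-triangular $\mathbf U$, whence $\det\tilde\A=\det\A$ and $\det(\sum_j\f_j\f_j')=(\det\A)^2=(\det\tilde\A)^2=\det(\sum_j\tilde\f_j\tilde\f_j')$. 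This also covers the singular case, where both sides vanish.

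I do not expect a genuine obstacle: the argument is routine linear algebra. The one place that needs a little care is the bookkeeping in the linearly dependent case --- ensuring that the sums and products of the rank-at-most-one projectors $\tilde\Pb_i$ behave as stated when some $\tilde\f_i=\0$ --- and this is taken care of by the zero-projector convention adopted at the outset.
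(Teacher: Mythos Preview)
Your proposal is correct and follows essentially the same approach as the paper: both recognise the construction as Gram--Schmidt, deduce the equivalence of the three formulas from the orthogonality relations $\tilde\Pb_i\tilde\Pb_k=\0$ and the span equality $\Cc(\f_1,\dots,\f_{j-1})=\Cc(\tilde\f_1,\dots,\tilde\f_{j-1})$, and obtain the determinant identity via a unit-triangular change of basis between $(\f_1,\dots,\f_m)$ and $(\tilde\f_1,\dots,\tilde\f_m)$. Your version is simply more explicit about the induction and the degenerate case than the paper's terse account.
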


\begin{proof}
	From formula \eqref{eq:tildeF2}, it is clear that the transformation from $\F=(\f_1, \ldots, \f_m)$ to $\tilde{\F}=(\tilde{\f}_1, \ldots, \tilde{\f}_m)$ is in fact a Gram-Schmidt orthogonalization, where the obtained orthogonal vectors $\tilde{\f}_j$ are not normalized. That is, $\F=\tilde{\F}\Rb$ for some upper triangular matrix $\Rb$ with ones on the diagonal. It follows that $\det(\F) = \det(\tilde{\F})$, which yields the desired result $\det(\F\F') = \det(\tilde{\F}\tilde{\F}')$.
	
	Since $\tilde{\f}_1, \ldots, \tilde{\f}_m$ are orthogonal, we have $\tilde{\Pb}_i \tilde{\Pb}_k = \0$ for any $i \neq k$, and hence \eqref{eq:tildeF3} is obtained. Formula \eqref{eq:tildeF1} holds because $\Cc(\f_1, \ldots, \f_{j-1}) = \Cc(\tilde{\f}_1, \ldots, \tilde{\f}_{j-1})$.
\end{proof}

\begin{theorem}
  Let $\Ss_{GK}$ be the saturated subset of $\Fc$ produced by the GKM. Then $\mathrm{eff}(\Ss_{GK}) \geq \frac{1}{m}$.
\end{theorem}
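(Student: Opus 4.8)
The plan is to combine the geometric form of the GKM's selection step (Theorem~\ref{tEquiv}) with the volume factorization of Lemma~\ref{lEff}, and then to extract the factor $1/m$ by \emph{averaging} the greedy optimality inequality over the elements of a $D$-optimal saturated subset. Write $\Ss_{GK}=\{\g_1,\ldots,\g_m\}$ in the order in which the GKM selects the vectors, put $\Ss_j=\{\g_1,\ldots,\g_j\}$ (with $\Ss_0=\emptyset$), and let $\Pb_j=\Pb(\Ss_j)$ be the orthogonal projector onto $\Cc(\M(\Ss_j))=\mathrm{span}(\g_1,\ldots,\g_j)$. Applying Lemma~\ref{lEff} to $\g_1,\ldots,\g_m$ in this order, the Gram--Schmidt vectors $\tilde\g_{j+1}=(\I_m-\Pb_j)\g_{j+1}$ are orthogonal, so
\[
 \det\M(\Ss_{GK})=\prod_{j=0}^{m-1}\bigl\|(\I_m-\Pb_j)\g_{j+1}\bigr\|^2 .
\]

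Next I would use the selection rule. By the projection form \eqref{eg:GKMproj} of step (S), $\g_{j+1}$ maximizes $\f'(\I_m-\Pb_j)\f$ over $\f\in\Fc\setminus\Ss_j$; since this quadratic form vanishes on $\Ss_j$, in fact $\|(\I_m-\Pb_j)\g_{j+1}\|^2\ge \f'(\I_m-\Pb_j)\f$ for \emph{every} $\f\in\Fc$. Fix a $D$-optimal saturated subset $\Ss^*=\{\h_1,\ldots,\h_m\}$ (it exists and is non-singular by the standing assumption on $\Fc$), apply this inequality to $\f=\h_1,\ldots,\h_m$, and average over $k$:
\[
 \bigl\|(\I_m-\Pb_j)\g_{j+1}\bigr\|^2\ \ge\ \frac1m\sum_{k=1}^m \h_k'(\I_m-\Pb_j)\h_k\ =\ \frac1m\,\tr\!\bigl((\I_m-\Pb_j)\M(\Ss^*)\bigr).
\]
Multiplying over $j=0,\ldots,m-1$ and combining with the first display reduces the theorem to the linear-algebraic inequality $\prod_{j=0}^{m-1}\tr\!\bigl((\I_m-\Pb_j)\M(\Ss^*)\bigr)\ge\det\M(\Ss^*)$.

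For this last inequality I would choose an orthonormal basis $\u_1,\ldots,\u_m$ of $\Rbb^m$ adapted to the flag $\mathrm{span}(\g_1)\subset\mathrm{span}(\g_1,\g_2)\subset\cdots$, so that $\I_m-\Pb_j=\sum_{i>j}\u_i\u_i'$. With $c_i:=\u_i'\M(\Ss^*)\u_i\ge 0$ this gives $\tr\!\bigl((\I_m-\Pb_j)\M(\Ss^*)\bigr)=\sum_{i>j}c_i\ge c_{j+1}$, hence $\prod_{j=0}^{m-1}\tr\!\bigl((\I_m-\Pb_j)\M(\Ss^*)\bigr)\ge\prod_{i=1}^m c_i\ge\det\M(\Ss^*)$, the last step being Hadamard's inequality for the positive semidefinite matrix $(\u_i'\M(\Ss^*)\u_k)_{i,k}$, whose diagonal entries are the $c_i$ and whose determinant is $\det\M(\Ss^*)$. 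Altogether $\det\M(\Ss_{GK})\ge m^{-m}\det\M(\Ss^*)$, so $\phi(\Ss_{GK})={\det}^{1/m}\M(\Ss_{GK})\ge\frac1m\,\phi(\Ss^*)$, i.e.\ $\mathrm{eff}(\Ss_{GK})\ge 1/m$.

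The conceptual crux I expect is the averaging step: bounding each greedy increment only against a single optimal regressor would be too lossy, and it is precisely the aggregation into $\tr\!\bigl((\I_m-\Pb_j)\M(\Ss^*)\bigr)$ that yields the bound with the sharp constant $1/m$. Everything downstream is the elementary flag-basis-plus-Hadamard computation, and everything upstream is bookkeeping already justified by the two preceding lemmas; I would also double-check the edge cases (the standing non-singularity assumption guarantees $\Ss^*$ exists, and Theorem~\ref{tEquiv} guarantees $\Ss_{GK}$ is non-singular, so both determinants are positive).
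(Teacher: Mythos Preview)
Your proof is correct, and it takes a genuinely different route from the paper's.

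Both arguments begin the same way: use Lemma~\ref{lEff} to factor $\det\M(\Ss_{GK})=\prod_j\lVert\tilde\g_{j+1}\rVert^2$ with $\tilde\g_{j+1}=(\I_m-\Pb_j)\g_{j+1}$, and invoke the projection form of the selection rule (Theorem~\ref{tEquiv}) to get $\lVert\tilde\g_{j+1}\rVert^2\ge\f'(\I_m-\Pb_j)\f$ for every $\f\in\Fc$. The divergence is in how this pointwise inequality is aggregated into a bound on $\det\M(\Ss^*)$. The paper rotates so that $\tilde\g_i=\lVert\tilde\g_i\rVert\e_i$, observes that every $\f\in\Fc$ lies in the box $\conv(\Ss_B)$ with $\Ss_B=\{(\pm\lVert\tilde\g_1\rVert,\ldots,\pm\lVert\tilde\g_m\rVert)'\}$, and then compares the normalized information matrices $\N(\Ss^*)\preceq\N(\Ss_B)=\mathrm{diag}(\lVert\tilde\g_i\rVert^2)$ in the Loewner order to conclude. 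You instead average the greedy inequality over the $m$ elements of $\Ss^*$ to obtain $\lVert\tilde\g_{j+1}\rVert^2\ge m^{-1}\tr\bigl((\I_m-\Pb_j)\M(\Ss^*)\bigr)$, and then close with the flag-basis computation and Hadamard's inequality for the positive semidefinite matrix $U'\M(\Ss^*)U$.

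What each buys: the paper's proof is more geometric and yields, as a byproduct, the explicit enclosing box $\Ss_B$ (useful intuition for why GKM is good). Your approach is purely linear-algebraic and arguably more self-contained: the only external ingredient beyond the two preceding lemmas is Hadamard's inequality, and no Loewner-order comparison of information matrices is needed. Your ``averaging then Hadamard'' step is also a pattern that generalizes readily to related greedy bounds.
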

\begin{proof}
Let $\Ss_{GK}=\{\f_1, \ldots, \f_{m}\}$ and define $\tilde{\Ss} = \{\tilde{\f}_1, \ldots, \tilde{\f}_m\}$, where the $\tilde{\f}_j$'s are given by Lemma \ref{lEff}. Then $\phi(\Ss_{GK}) = \phi(\tilde{\Ss})$. Let us also work with ``normalized'' information matrix $\N(\Ss) = \sum_{\f \in \Ss} \f\f'/s$, where $s$ is the size of $\Ss$. Without loss of generality, let us rotate the coordinate axes of $\Rbb^m$, so that $\tilde{\f}_j = \Vert \tilde{\f}_j \Vert \e_j$, which is possible because $\tilde{\f}_1, \ldots, \tilde{\f}_m$ are orthogonal. Then, $\det(\sum_i \tilde{\f}_i\tilde{\f}_i') = \prod_i \Vert \tilde{\f}_i \Vert^2$. Moreover, the set $\Ss_B$ that consists of all points of the form $(\pm \Vert \tilde{\f}_1 \Vert, \ldots, \pm \Vert \tilde{\f}_m \Vert)'$ satisfies $\det(\N(\Ss_B)) = \prod_i \Vert \tilde{\f}_i \Vert^2$ and $\N(\Ss_B) \succeq \N(\Ss)$ for any $\Ss \subseteq \Fc$ because all the elements of $\Fc$ lie in $\mathrm{conv}(\Ss_B)$. In particular, $\N(\Ss_B) \succeq \N(\Ss^*)$, where $\Ss^*$ is the $D$-optimal subset of size $m$. It follows that $\det(\M(\Ss^*)) = m^m\det(\N(\Ss^*)) \leq m^m\det(\N(\Ss_B)) = m^m \prod_i \Vert \tilde{\f}_i \Vert^2$. Therefore,
$$
\frac{ \phi(\Ss_{GK}) }{ \phi(\Ss^*) } \geq \frac{ \phi(\tilde{\Ss}) }{ \phi(\Ss_{B}) } = \frac{ \prod_i \Vert \tilde{\f}_i \Vert^{2/m} }{m \prod_i \Vert \tilde{\f}_i \Vert^{2/m} } = \frac{1}{m}.
$$
\end{proof}

As a side result, we obtain an explicit formula for the $D$-optimality value of $\Ss_{GK}$: $\phi(\Ss_{GK}) = \prod_i \Vert \tilde{\f}_i \Vert^{2/m} $, where the projected regressors $\tilde{\f}_i$ are constructed during the GKM process.
\bigskip

A na\"{i}ve implementation of the GKM using \eqref{eg:GKMproj} is relatively time consuming as it requires the calculations of projection matrices $\Pb(\Ss_j)$ and the projections of the $n$ regressors. However, \eqref{eg:GKMproj} depends only on the norms of the projected regressors $\tilde{\f}$ and formula \eqref{eq:tildeF3} shows that the projected regressors can be calculated successively by applying projections to the one-dimensional subspaces orthogonal to $\tilde{\f}_j$. 
Therefore, the algorithm can ``forget'' the original regressors, which allows for the following efficient implementation of the GKM:
\bigskip

\noindent\textbf{Efficient GKM}\\
1. Set $\ind \leftarrow \emptyset$\\
2. $j \leftarrow \arg\max_{i=1,\ldots,n} \lVert \f_i\rVert$\\
3. Set $\ind \leftarrow \ind \cup \{j\}$\\
5. If $j<m$, update $\f_i \leftarrow \f_i -  [\f_j' \f_i / (\f_j'\f_j)] \f_j $, set $j \leftarrow j+1$ and return to 2\\
6. Output $\ind$
\bigskip

The efficient formulation of the GKM has asymptotic complexity $O(nm^2)$. Its implementation in the programming language of \texttt{R} is in Appendix.
\bigskip

The GKM is non-random which is, in some situations, a drawback.\footnote{For instance, if we use the saturated subsets for initiation of deterministic heuristics, a randomized initiation method is crucial for the approach of multiple restarts.} Galil and Kiefer were aware of this problem and they proposed a randomisation based on first selecting some points at random and only then starting to apply the GKM. However, such an approach may lead to singular $m$-point subsets, if the first random points were chosen ``unluckily''.
\bigskip

Nevertheless, the GKM can be also randomised as follows: in step (S), instead of selecting regressor $\f_{(j)}$ that maximizes $v(\f) := \f' (\I_m - \Pb(\M(\Ss_j)) \f$, we may select $\f_{(j)}$ at random, with probabilities of the particular regressors $\f$ proportional to $v^\alpha(\f)$, where $\alpha>0$; the higher $\alpha$ is, the less random is the procedure. 

Because the randomised GKM never chooses regressors that lie in $\mathrm{Span}(\f_1, \ldots, \f_{j-1})$, it also arrives at a non-singular subset if such a subset exists by the same argument as in the non-random case.

\subsection{The Kumar-Yildirim method (KYM)}

Originally, the Kumar-Yildirim method was proposed in \cite{KumarYildirim} to find initial points for computing the $\mathrm{MVEE}$, which need not be centred at the origin. In each iteration, the method chooses $2$ points $\f_{j_1}, \f_{j_2} \in \Fc$ maximizing and minimizing the scalar product $\f'\b$ for a randomly chosen $\b$. Thus, the output is a set of $2m$ points. 
\bigskip

But the $D$-optimal approximate design problem is equivalent to computing the $\mathrm{MVEE}_0$ centred at the origin and containing $\f_1,\ldots,\f_n$. For such a problem, one can apply the original Kumar-Yildirim method with formally added points $-\f_1, \ldots, -\f_n$. Then, the method selects in each step both $\f_j$ and $-\f_j$ for some $j$, which is redundant. It follows that it is enough to choose one of the two points.
\bigskip

Hence, the modified KYM can be stated as a special case of the general greedy heuristic where step (S) is 
\begin{equation*}
  \f_{(j)} \in \mathrm{argmax}_{\f \in \Fc\setminus\Ss_j} |\f'\b|
\end{equation*}
and $\b$ is a random direction from $\Cc^\perp(\Ss_j)$.

It is not difficult to see that the output of the KYM is a non-singular saturated subset. However, we can formulate a stronger statement as follows.

\begin{theorem}\label{tKYeff}
  Let $\Ss_{KY}$ be the saturated subset of $\Fc$ produced by the KYM. Then
  \begin{equation*}
    \mathrm{eff}(\Ss_{KY}) \geq \frac{\pi}{4m(\Gamma(1+m/2))^{2/m}} \geq \frac{\pi}{2m^2}.
  \end{equation*}
\end{theorem}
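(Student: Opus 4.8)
The plan is to prove the \emph{stronger} bound $\mathrm{eff}(\Ss_{KY})\geq 1/m$; both displayed inequalities then follow from elementary estimates of $\Gamma(1+m/2)$. First I would fix notation: list the selected regressors in the order in which they are added, $\Ss_{KY}=\{\f_1,\ldots,\f_m\}$; let $\b_0,\ldots,\b_{m-1}$ be the random unit directions used in the corresponding saturation steps, so that $\b_{k-1}\in\mathrm{Span}(\f_1,\ldots,\f_{k-1})^\perp$; and put $w_{k-1}:=\max_{\f\in\Fc\setminus\Ss_{k-1}}|\f'\b_{k-1}|=|\f_k'\b_{k-1}|$. By Lemma~\ref{lEff}, $\det\M(\Ss_{KY})=\prod_{k=1}^m\lVert\tilde{\f}_k\rVert^2$, where $\tilde{\f}_1,\ldots,\tilde{\f}_m$ is the unnormalised Gram--Schmidt sequence of $\f_1,\ldots,\f_m$. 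The basic geometric fact is that, although $w_{k-1}$ is a maximum over $\Fc\setminus\Ss_{k-1}$ only, we have $\f'\b_{k-1}=0$ for every $\f\in\Ss_{k-1}$, and hence $|\f'\b_{k-1}|\leq w_{k-1}$ for \emph{every} $\f\in\Fc$; that is, $\Fc$ lies in the parallelepiped $\{\x:|\x'\b_{k-1}|\leq w_{k-1},\ k=1,\ldots,m\}$.

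The one step that needs a short computation is the determinant identity $|\det\B|=\bigl(\prod_{k=1}^m w_{k-1}\bigr)\big/\bigl(\prod_{k=1}^m\lVert\tilde{\f}_k\rVert\bigr)$, where $\B:=[\,\b_0\,|\,\cdots\,|\,\b_{m-1}\,]$. To obtain it I would write the columns of $\B$ in the orthonormal basis $\tilde{\f}_1/\lVert\tilde{\f}_1\rVert,\ldots,\tilde{\f}_m/\lVert\tilde{\f}_m\rVert$: since $\b_{k-1}$ is orthogonal to $\mathrm{Span}(\f_1,\ldots,\f_{k-1})=\mathrm{Span}(\tilde{\f}_1,\ldots,\tilde{\f}_{k-1})$, the first $k-1$ coordinates of $\b_{k-1}$ in this basis vanish, so $\B$ is triangular with $k$-th diagonal entry $\b_{k-1}'\tilde{\f}_k/\lVert\tilde{\f}_k\rVert=\b_{k-1}'\f_k/\lVert\tilde{\f}_k\rVert$ (the part $\Pb(\Ss_{k-1})\f_k$ of $\f_k$ being annihilated by $\b_{k-1}$), of absolute value $w_{k-1}/\lVert\tilde{\f}_k\rVert$; multiplying the diagonal entries gives the identity. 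Here $w_{k-1}>0$, so $\det\B\neq 0$, because $\b_{k-1}\neq\0$ is orthogonal to $\mathrm{Span}(\Ss_{k-1})$ while $\Fc$ spans $\Rbb^m$ --- this incidentally re-proves that $\Ss_{KY}$ is non-singular.

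To conclude, let $\Ss^*=\{\g_1,\ldots,\g_m\}$ be a $D$-optimal saturated subset and $\mathbf{G}:=[\,\g_1\,|\,\cdots\,|\,\g_m\,]$, so that $\det\M(\Ss^*)=(\det\mathbf{G})^2$. The matrix $\B'\mathbf{G}$ has $(k,i)$ entry $\b_{k-1}'\g_i$, of absolute value at most $w_{k-1}$ since $\g_i\in\Fc$; hence its $k$-th row has Euclidean norm at most $w_{k-1}\sqrt{m}$, and Hadamard's inequality gives $|\det(\B'\mathbf{G})|\leq m^{m/2}\prod_{k=1}^m w_{k-1}$. Dividing by $|\det\B|$ and using the identity above yields $|\det\mathbf{G}|\leq m^{m/2}\prod_{k=1}^m\lVert\tilde{\f}_k\rVert$, hence $\det\M(\Ss^*)\leq m^m\det\M(\Ss_{KY})$, i.e.\ $\mathrm{eff}(\Ss_{KY})\geq 1/m$. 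Finally, for $m\geq 2$ one has $\Gamma(1+m/2)\geq\Gamma(2)=1>\pi/4$, which gives $1/m\geq\pi/(4m(\Gamma(1+m/2))^{2/m})$, and $\Gamma(1+m/2)\leq(m/2)^{m/2}$, which gives $\pi/(4m(\Gamma(1+m/2))^{2/m})\geq\pi/(2m^2)$.

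The hard part is really only the determinant identity for $\B$; everything else is Cauchy--Schwarz in the guise of Hadamard's inequality, plus the already-established identity $\phi(\Ss_{KY})=\prod_{k=1}^m\lVert\tilde{\f}_k\rVert^{2/m}$. I would take care with the index bookkeeping --- and note that the argument in fact delivers the constant $1/m$, which is considerably sharper than the $\Theta(1/m^2)$ bound in the statement, so the stated inequality is presumably not tight.
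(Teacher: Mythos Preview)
Your argument is correct, and in fact it proves a sharper result than the paper does. The paper's proof is entirely different: it invokes the Betke--Henk volume approximation for convex bodies (applied to the symmetric body $K=\conv(\pm\f_1,\ldots,\pm\f_n)$) together with John's theorem $K\supseteq\MVEE(K)/\sqrt{m}$ and the equivalence between the $\MVEE_0$ problem and $D$-optimal design; chaining these external results produces the $\Theta(1/m^2)$ constant in the statement. Your route is self-contained linear algebra: the slab constraints $|\f'\b_{k-1}|\leq w_{k-1}$ play the role that the axis-aligned box $\prod_k[-\lVert\tilde{\f}_k\rVert,\lVert\tilde{\f}_k\rVert]$ plays in the paper's GKM bound (Theorem~3), and the triangular determinant identity $|\det\B|=\prod_k w_{k-1}/\prod_k\lVert\tilde{\f}_k\rVert$ compensates for the fact that the $\b_{k-1}$ are not orthogonal. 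What your approach buys is (i) no dependence on results from convex geometry, and (ii) the bound $\mathrm{eff}(\Ss_{KY})\geq 1/m$, which matches the GKM guarantee of Theorem~3 and is strictly stronger than the stated theorem --- so your closing remark that the displayed inequality is not tight is well founded. The only places where your write-up could be tightened are cosmetic: the matrix of coordinates of $\b_{k-1}$ in the basis $\tilde{\f}_k/\lVert\tilde{\f}_k\rVert$ is \emph{lower} triangular (zeros for row index $<$ column index), and the estimate $\Gamma(1+m/2)\leq(m/2)^{m/2}$ for $m\geq 2$ deserves a one-line justification (e.g., $\ln\Gamma(1+x)\leq x\ln x$ for $x\geq 1$, which follows from $\psi(x+1)\leq\ln(x+1)\leq\ln x+1$).
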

\begin{proof}
The authors of \cite{KumarYildirim} obtain the efficiency bound for the output of their algorithm by applying the results of the paper \cite{BetkeHenk}. There, it is shown that $\vert\det(\z_1, \ldots, \z_m) \vert / m! \leq \vol(K) \leq \vert \det(\z_1, \ldots, \z_m) \vert$, where each $\z_j$  is equal to the difference $\f_{j_1} - \f_{j_2}$ as in the KYM, but for a general convex body $K$. By considering the convex body $K=\mathrm{conv}(\pm\f_1, \ldots, \pm\f_n)$\footnote{$K$ is called the Elfving set in the optimal experimental design literature}, we obtain for  $\Ss_{KY}$ that
\begin{equation*}
 \vol(K) \leq \vert \det(2\F(\Ss_{KY})) \vert = 2^m \det(\M(\Ss_{KY}))^{1/2},
\end{equation*}
where $2\F(\Ss_{KY})$ appears in the inequality because in each iteration of the algorithm in \cite{BetkeHenk}, one chooses $\z_j = \f_{(j)} - (-\f_{(j)}) = 2\f_{(j)}$ for the $\f_{(j)}$ maximizing $\vert \f_i' \b \vert$.

Let $\MVEE^* =\MVEE(K)$. In \cite{John}, John showed that $K \supseteq \MVEE(K)/\sqrt{m}$ if $K$ is centred at the origin; hence $\vol(K) \geq \vol(\MVEE^*)/m^{m/2}$ in our settings. Combining the previous results, we obtain that
\begin{equation*}
\begin{aligned}
\frac{1}{m^{m/2}} \vol(\MVEE^*) 
&\leq 2^m (\det(\M(\Ss_{KY}))^{1/2}.
\end{aligned}
\end{equation*}

From the results mentioned, e.g., in \cite{KumarYildirim} it follows that
$$ 
\vol(\MVEE^*) = \eta (\det \M(\Ss^*))^{1/2},
$$ 
where $\Ss^*$ is the $D$-optimal saturated subset and $\eta =  \pi^{m/2} / \Gamma(1+m/2)$ is the volume of the unit ball. Then,
$$
\frac{\eta (\det \M(\Ss^*))^{1/2}}{m^{m/2}} \leq 2^m (\det(\M(\Ss_{KY}))^{1/2}
$$
and
$$
\frac{\Phi_D(\M(\Ss_{KY}))}{\Phi_D(\M(\Ss^*))} \geq \frac{\eta^{2/m}}{4m} = \frac{\pi}{4m(\Gamma(1+m/2))^{2/m}}.$$
\end{proof}

For a large $m$, the Stirling approximation of $\Gamma$ yields
\begin{equation*}
 \mathrm{eff}(\Ss_{KY}) \geq \frac{\pi}{4m(\Gamma(1+m/2))^{2/m}} \approx \frac{e\pi}{2m^2}.
\end{equation*}
Note that by a direct application of the efficiency limit provided by Lemma 3.1 of \cite{KumarYildirim}, one would obtain a weaker bound $\mathrm{eff}(\Ss_{KY}) \geq m^{-3}$ than that in Theorem \ref{tKYeff}. The bound in Theorem \ref{tKYeff} is tighter because Kumar and Yildirim deal with general sets, and the centrally symmetric sets considered in this paper allow for stronger results.
\bigskip

An efficient implementation of the KYM in the programming language of \texttt{R} is given in Appendix. This implementation has asymptotic complexity $O(nm^2)$, i.e., the same as the GKM implementation, although non-asymptotically the KYM tends to be faster than GKM (unless $m \gtrapprox n/2$). 
\bigskip

Interestingly, the main idea of the KYM is similar to the projective interpretation of the GKM: both procedures seek to find a ``good'' new regressor in the space $\Cc(\M(\Ss))^\perp$ not yet spanned by the chosen regressors. However, the respective approaches for measuring what is a good regressor differ.
\bigskip

Note that both GKM and KYM are invariant under rotation, and, unlike the RGH, they are also invariant with respect to a centred dilation $\f \to c\f$, $c>0$, i.e., invariant under the common change of the units of measurement, if the regressors correspond to explanatory variables.

\section{Variants and extensions}
   
\subsection{Subsets of sizes other than $m$}

  The greedy heuristics for saturated subsets can be easily modified to generate efficient non-saturated subsets, i.e., subsets of sizes $s$ other than $m$.
  \bigskip 

The problem of obtaining efficient supersaturated experimental designs (e.g., \cite{Gilmour06}) is very similar to problem of constructing $D$-efficient subsets with $s<m$. In fact, the main idea of the Galil-Kiefer initiation for $D$-optimality, i.e. that $\det(\F(\Ss)\F'(\Ss))$ can be maximized instead of $\det(\F'(\Ss)\F(\Ss))$, is the same as the approach to $D$-optimal supersaturated designs in \cite{JonesMajumdar}. Note, however, that the motivation of Jones and Majumdar in  \cite{JonesMajumdar} for working with $\F(\Ss)\F'(\Ss)$ arises from a minimum bias estimator of the model parameters, and that they provide analytical results on optimal supersaturated designs in main-effects model, not design algorithms.
 \bigskip

Based on a greedy heuristic for saturated subsets, we can formulate a general scheme for a greedy heuristic for arbitrary size subsets $s$. Simply stated, if $s<m$ we can pre-maturely stop the greedy heuristics and if $s>m$, we can repeatedly perform the heuristic for a subset of size $m$ (or less than $m$ in the last run) and after each run of the method remove the $m$ obtained regressors from $\Fc$. We can generalize this idea such that at each selection step, we select not a single regressor but a batch of $b$ regressors, which can be computationally advantageous. More precisely:
\bigskip

\noindent\textbf{Scheme of greedy heuristics for subsets of size $s$}\\
1. Set $j \leftarrow 0$ and $\Ss_0 \leftarrow \emptyset$\\
2. Set $a \leftarrow \min(s-j,b)$\\
3. Select an $a$-element subset $\Fc_j$ of $\Fc \setminus \Ss_j$\:\: \textbf{(S)}\\
4. Set $\Ss_{j+a} \leftarrow \Ss_j \cup \Fc_j$\\
5. If $j<s$ set $j \leftarrow j+a$ and return to 2.\\
6. Output $\Ss_s$.
\bigskip

Of course here we will also need to keep and properly update some parameters which regulate step (S). Special case $b=1$, $s=m$ provides the scheme of the heuristics for saturated subsets, but note that the IBOSS mentioned in the introduction and the original KYM are also special cases of this general scheme. The multitude of various specifications of the above scheme and the comparison of the resulting algorithms is left for further research.

\subsection{A pre-selection strategy for a big $n$}\label{subsect:presel}

Let $\mathcal{R}:=\{\g_1,\ldots,\g_{km}\}$ be a uniformly randomly selected $km$-element subset of $\Fc$, where $1 \leq k \leq n/m$. Let us estimate the probability that $\mathcal{R}$ is non-singular. Let $\h_1,\ldots,\h_{km}$ be uniformly randomly chosen from $\Fc$ with replacement. For $t=0,\ldots,k-1$ let $A_t$ be the event that $[\h_{tm+1},\ldots,\h_{(t+1)m}]$ is a singular $m \times m$ matrix. Clearly, $A_0,\ldots,A_{k-1}$ are independent, therefore
\begin{eqnarray*}
 P[\mathcal{R} \text{ is non-singular}] =  P[\g_1,\ldots,\g_{km} \text{ span } \Rbb^m] \geq \\
 P[\h_1,\ldots,\h_{km} \text{ span } \Rbb^m] \geq
 1-P[\cap_t A_t] \geq 1-p^k,
\end{eqnarray*} 

where $p$ is the probability of each of $A_t$. That is, the probability that a $km$-element uniformly selected random subset of $\Fc$ is singular vanishes at least exponentially with respect to $k$.
\bigskip

At the same time, the probability $p$ is itself not too close to $1$ for most of sets $\Fc$ used in applications. Even for such a symmetric set as $\Fc = \{-1,1\}^m$ from Example \ref{exSingularRandom} it is possible to show that $p \leq (3/4 + O(1))^m$; cf. \cite{TaoVu}.
\bigskip

This motivates the following strategy to obtain an efficient saturated subset of $\Fc$:
\begin{enumerate}
  \item Randomly pre-select $\tilde{\Fc}$ of size $km$ from $\Fc$.
  \item Use either GKM or KYM to construct a saturated subset of $\tilde{\Fc}$.
\end{enumerate}
Note that the size of $\tilde{\Fc}$ does not depend on $n$, i.e., it can be small also for an enormous set $\Fc$. Even if $k$ is only moderately large, the probability that the resulting saturated subset is singular can be negligibly small. For instance, if $k=50$ and $p=3/4$, the probability of a singular resulting subset is less than $10^{-6}$.
\bigskip

We will show in Section \ref{sec:num} that the pre-selection strategy is a good compromise between the speed and efficiency.

\subsection{Multi-run approach}\label{subsect:multi}

 Let H be a randomized heuristic. Since H can generate subsets of various efficiencies, a natural idea is to run H multiple times and then select the best subset obtained. Note that in this respect the randomness inherent in the result can be a positive, not a negative feature of an algorithm.
\bigskip 
 
 Denote $Y_i$, $i=1,2,\ldots$, the iid random variables representing the efficiencies of the subsets obtained by independent runs of H. Let $0<\mathrm{eff} \lesssim 1$ be such that $P[Y_1\geq\mathrm{eff}]>0$. Then it can be shown that we need roughly $\ln(2)/P[Y_1 \geq \mathrm{eff}]$ runs of H for the median of $\max_{1 \leq i \leq N}Y_i$ to be $\mathrm{eff}$. In this sense, two heuristics are similarly good, if they have similar ratios $t^{-1}P[Y_1 \geq \mathrm{eff}]$, where $t$ is the time of a single run of H plus the time needed to compute the quality of the resulting subset.
\bigskip
 
The performance of the multi-run approach is therefore given by the speed of H as well as the tails of the distribution of $Y_1$, and both quantities may be difficult to assess. However, numerical simulations for the randomized heuristics that we consider in this paper suggest that the variability of $Y_1$ tends to be rather small, therefore $P[Y_1 \geq \mathrm{eff}]$ is mostly determined by the median of $Y_1$. Hence, in the assessment of the performance of a heuristic, the median efficiency of the subsets resulting from a single run is the more important factor than the speed. See Section \ref{sec:num} for numerical results.

\subsection{Eigenvalue-based criteria}

We say that a criterion $\phi: 2^{\Fc} \to \Rbb$ is eigenvalue-based if it depends only on the eigenvalues of $\M(\Ss)$. For instance, $D$-optimality can be expressed as $\phi(\Ss)=\left(\prod_i \lambda_i(\M(\Ss))\right)^{1/m}$. Besides $D$-optimality, the second most important eigenvalue-based criterion is $A$-optimality defined as $\phi_A(\Ss)=\left(\sum_i \lambda_i^{-1}(\M(\Ss))\right)^{-1}$ for a non-singular $\M(\Ss)$ and $\phi_A(\Ss)=0$ for a singular $\M(\Ss)$ (e.g., \cite{puk}, Chapter 6).
\bigskip

Clearly, the basic idea of the Galil-Kiefer method can be used for any eigenvalue-based criterion -- we can simply specify (S) as selecting $\f_{(j)}$ that maximizes an appropriate function of the non-zero eigenvalues of $\M(\Ss_j) + \f\f'$. In some cases, like $D$-optimality, such maximization can be performed much more easily than via numerical evaluation of the eigenvalues. For an example, we can utilize the following theorem for $A$-optimality: 

\begin{theorem}
	Let $\M(\Ss)$ be of rank $r$. Then, 
	\begin{eqnarray}\label{eAoptMax}
		\mathrm{argmax}_{\f \in \Fc\setminus\Ss}\left(\sum_{i=1}^{r+1}\lambda_i^{-1}(\M(\Ss) + \f\f')\right)^{-1}= \nonumber \\
		\mathrm{argmax}_{\f \in \Fc\setminus\Ss}\frac{\f'(\I_m-\Pb(\Ss))\f}{1+\f'\M^+(\Ss)\f},
	\end{eqnarray}
\end{theorem}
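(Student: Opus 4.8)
The plan is to follow the same route as in the proof of Theorem~\ref{tEquiv}, but with the single determinant there replaced by the trace of the inverse of an $(r+1)\times(r+1)$ matrix. First I would dispose of the degenerate regressors: if $\f\in\Cc(\M(\Ss))$ then $\M(\Ss)+\f\f'$ is still of rank $r$, so $\lambda_{r+1}(\M(\Ss)+\f\f')=0$ and the left-hand objective is $0$ (under the convention $1/0=\infty$, $1/\infty=0$), while $(\I_m-\Pb(\Ss))\f=\0$ makes the right-hand objective $0$ as well; such $\f$ therefore never enter either argmax unless all of $\Fc\setminus\Ss$ lies in $\Cc(\M(\Ss))$, in which case both argmaxes equal $\Fc\setminus\Ss$ and the claim is trivial. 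The case $r=0$ is also immediate, both sides reducing to $\f'\f$. So I would henceforth assume $r\ge 1$ and restrict attention to $\f\notin\Cc(\M(\Ss))$.

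Next I would write the spectral decomposition $\M(\Ss)=UDU'$ with $U=(\u_1,\ldots,\u_r)$ having orthonormal columns and $D=\mathrm{diag}(\mu_1,\ldots,\mu_r)$, $\mu_i>0$, and set $\V:=UD^{1/2}$, so that $\M(\Ss)=\V\V'$, $\Pb(\Ss)=UU'$ and $\M^+(\Ss)=UD^{-1}U'$. Since $\f\notin\Cc(\V)$, the matrix $\B:=(\V,\f)'(\V,\f)$ is nonsingular of order $r+1$, and its $r+1$ eigenvalues are exactly the positive eigenvalues $\lambda_1(\M(\Ss)+\f\f'),\ldots,\lambda_{r+1}(\M(\Ss)+\f\f')$ of $\M(\Ss)+\f\f'=(\V,\f)(\V,\f)'$ (using the fact, invoked already for \eqref{eq:GKM} via \cite{Seber}, that $\F\F'$ and $\F'\F$ share their positive spectrum). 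Consequently $\sum_{i=1}^{r+1}\lambda_i^{-1}(\M(\Ss)+\f\f')=\tr(\B^{-1})$, so maximizing the left-hand objective over $\f\notin\Cc(\M(\Ss))$ is the same as minimizing $\tr(\B^{-1})$.

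The computational core is then to evaluate $\tr(\B^{-1})$. Partitioning $\B$ with diagonal blocks $D$ and $\f'\f$, the Schur complement of the $D$-block is the scalar $c=\f'\f-\f'UU'\f=\f'(\I_m-\Pb(\Ss))\f>0$, and a routine application of the block-inverse formula (e.g.\ \cite{Harville}) gives
\[
\tr(\B^{-1})=\tr(D^{-1})+\frac{1+\f'UD^{-1}U'\f}{c}=\tr\big(\M^+(\Ss)\big)+\frac{1+\f'\M^+(\Ss)\f}{\f'(\I_m-\Pb(\Ss))\f}.
\]
Since $\tr(\M^+(\Ss))$ is a constant independent of $\f$, minimizing $\tr(\B^{-1})$ is equivalent to minimizing $\big(1+\f'\M^+(\Ss)\f\big)/\big(\f'(\I_m-\Pb(\Ss))\f\big)$, i.e.\ to maximizing its reciprocal, which is precisely the right-hand objective in \eqref{eAoptMax}. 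Together with the first paragraph, which shows the two objectives coincide (both vanishing) on $\Cc(\M(\Ss))\cap(\Fc\setminus\Ss)$, this yields the claimed equality of argmaxes over all of $\Fc\setminus\Ss$.

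I do not expect a genuine obstacle here: once one notices that the left-hand side is $[\tr(\B^{-1})]^{-1}$ the argument is routine. The only points that need care are the direction of the optimization ($\mathrm{argmax}$ of a positive reciprocal equals $\mathrm{argmin}$ of the quantity), the verification that $\tr(D^{-1})=\tr(\M^+(\Ss))$ and $UD^{-1}U'=\M^+(\Ss)$ are genuinely $\f$-independent, and the bookkeeping of the degenerate case $\f\in\Cc(\M(\Ss))$ so that the argmax identity holds verbatim on the whole candidate set rather than only on $\{\f:\f\notin\Cc(\M(\Ss))\}$.
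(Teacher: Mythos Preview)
Your proposal is correct and follows essentially the same route as the paper: reduce the left-hand objective to $\tr\big((\V,\f)'(\V,\f)\big)^{-1}$ via the shared positive spectrum of $\F\F'$ and $\F'\F$, then evaluate that trace with the block-inverse formula to obtain a constant plus $(1+\f'\M^+\f)/\f'(\I_m-\Pb)\f$. The only differences are cosmetic---you pick the specific factorization $\V=UD^{1/2}$ from the spectral decomposition (so $\V'\V=D$ and the identities $UD^{-1}U'=\M^+$, $\tr(D^{-1})=\tr(\M^+)$ are immediate), whereas the paper works with a generic rank-$r$ factor $\V$ and invokes $\V(\V'\V)^{-2}\V'=\M^+$; and you explicitly dispose of the degenerate cases $\f\in\Cc(\M(\Ss))$ and $r=0$, which the paper leaves implicit.
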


\begin{proof}
	The positive eigenvalues of $\M(\Ss) + \f\f' = (\V,\f)(\V,\f)'$ are equal to the positive eigenvalues of $(\V,\f)' (\V,\f)$ with the notation as in the proof of Theorem \ref{tEquiv}. Hence
	$$\sum_{i=1}^{r+1} \lambda_i^{-1}(\M(\Ss) + \f\f') = \tr\begin{bmatrix}
	\V'\V & \V'\f \\ \f'\V & \f'\f
	\end{bmatrix}^{-1},$$
	which is equal to $\tr(\V'\V)^{-1} + s^{-1}\f'\V(\V'\V)^{-2}\V'\f + s^{-1}$, where $s=\f'(\I_m - \V(\V'\V)^{-1}\V')\f = \f'(\I_m - \Pb(\M(\Ss)))\f$, based on the formula for the inverse of a block matrix (\cite{Harville}, Theorem 8.5.11). Observing that $\V(\V'\V)^{-2}\V'=\M^+$ yields that the studied maximization problem is equivalent to the minimization of
	\begin{equation}\label{eAoptMin}
		\tr(\V'\V)^{-1} + \frac{1+ \f'\M^+\f}{\f'(\I_m - \Pb_\M)\f}.
	\end{equation}
	Because $\tr(\V'\V)^{-1}$ does not depend on $\f$, the minimization of \eqref{eAoptMin} is equivalent to the maximization of \eqref{eAoptMax}.
\end{proof}

That is, the GKM for $A$-optimality is very similar to the GKM for $D$-optimality, but $A$-optimality requires an additional ``correction term'' to be added to \eqref{eq:GKMproj}.
\bigskip

It is unclear how can the KYM be tailored to $A$-optimality and other eigenvalue-based criteria. However, the efficiency of the $D$-optimal approximate design with respect to a very general class of eigenvalue-based criteria is bounded from below by $1/m$ (\cite{Harman04Moda}) which suggests that the KYM (as well as the GKM for $D$-optimality) will produce reasonably efficient subsets even in the case of other criteria than $D$-optimality. This is also supported by our numerical experience. We leave the development of the heuristics for general eigenvalue-based criteria to further research.

\section{Numerical comparisons}\label{sec:num}

To provide a brief but comprehensive comparison of the performances of the greedy methods we generate the sets $\Fc$ at random, as well as deterministically: 
\begin{enumerate}
 \item We first generate a covariance matrix $\Sigma$ from the Wishart distribution $\mathcal{W}_d(\I_d,d)$ and then $\x_i$ independently from $\mathcal{N}_d(\0_d,\Sigma)$. We set $\Fc=\{(\x'_1,1)', \ldots, (\x'_n,1)'\} \subset \Rbb^m$, $m=d+1$.
 \item We set $\Fc=\{-1,1\}^m$.
\end{enumerate}

The efficiencies of the saturated subsets $\Ss$ are calculated as $\phi(\Ss)/\phi^*_m$, where $\phi^*_m$ is computed via the REX algorithm suggested in \cite{HarmanEA19}. We used a computer with a 64-bit Windows 10 system running an Intel Core i7-5500U CPU processor at $2.40$ GHz with $8$ GB of RAM. For the RGH, we used the regularization parameter $\delta=10^{-4}$.
\bigskip

The results are exhibited in Figures \ref{fTimeEffnorm} and \ref{fTimeEffcube}. The numerical observations can be summarized as follows:
\begin{itemize}
\item Both RGH and GKM tend to provide highly efficient subsets. For a random $\Fc$ the results of GKM and RGH usually coincide and for $\Fc=\{-1,1\}^m$, RGH sometimes gives slightly more efficient subsets than GKM. On the other hand, GKM tends to be as fast as or faster than RGH. 
\item The KYM is somewhat faster but less efficient than GKM.
\item As one might expect, the random initiation (RND) is the fastest and the least efficient method. The leverage-weighted random sampling (RNDl), performs badly in all analysed cases. For $\Fc=\{-1,1\}^m$ both RND and RNDl occasionally fail to the extent that they provide singular subsets. 
\item GKM and KYM applied to a random subsample of size $50m$ (cf. Subsection \ref{subsect:presel}), denoted by GKMf and KYMf in the figures, are both very fast and efficient. In some cases, the speed of GKMf and KYMf is only slightly smaller than the speed of the random sub-sampling methods, yet, with a few exceptions, the efficiency is the same or almost the same as that of GKM and KYM applied to the full set $\Fc$. Moreover, the computation times and the resulting efficiencies are relatively stable under changes of $n$ and $m$.
\end{itemize}

To compare the degree of the efficiency improvement using the multi-run approach described in Subsection \ref{subsect:multi}, we computed the ``time profiles'' of the increase in efficiency of the best subset found by a repeated independent application of the individual heuristics. Evidently, for all procedures the efficiency increases very slowly (notice that the horizontal axis denotes the \emph{logarithm} of the time), i.e., for the performance of the multi-run method the efficiency of a single-run is crucial. Naturally, the numerical differences between the studied methods depend on a multitude of factors, such as the actual implementation used and the set $\Fc$ itself. Nevertheless, the exhibited figures are representative of the general behaviour of the heuristics that we observed. 

\begin{figure}[!h]
\centering
\includegraphics[width=\linewidth]{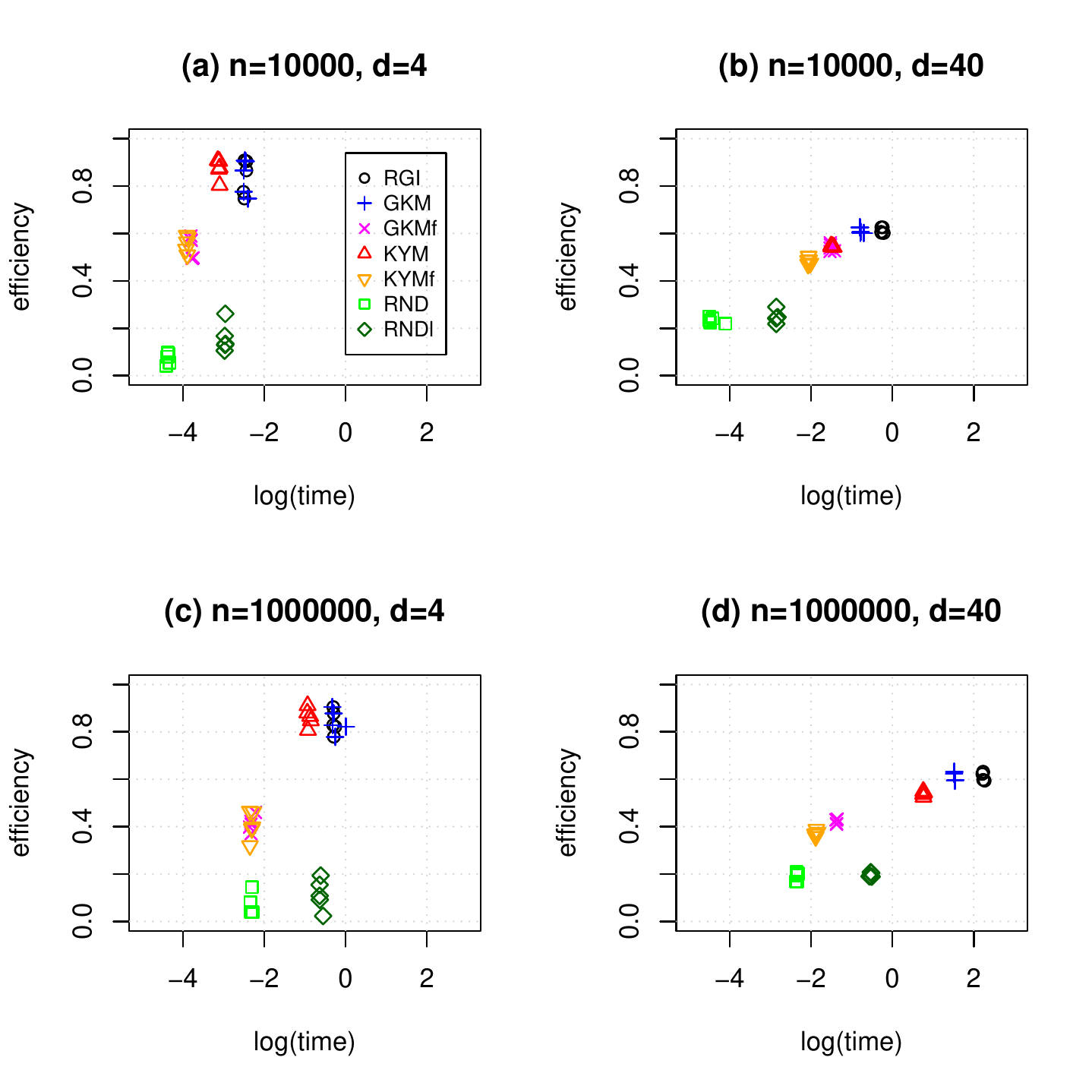}
\caption{The running times (in decadic logarithm) of the greedy saturated subsample heuristics and the $D$-efficiencies of the resulting subsets. Regressors generated from a lifted $d$-dimesional normal distribution ($m=d+1$) as detailed in the text.}\label{fTimeEffnorm}
\end{figure}

\begin{figure}[!h]
\centering
\includegraphics[width=\linewidth]{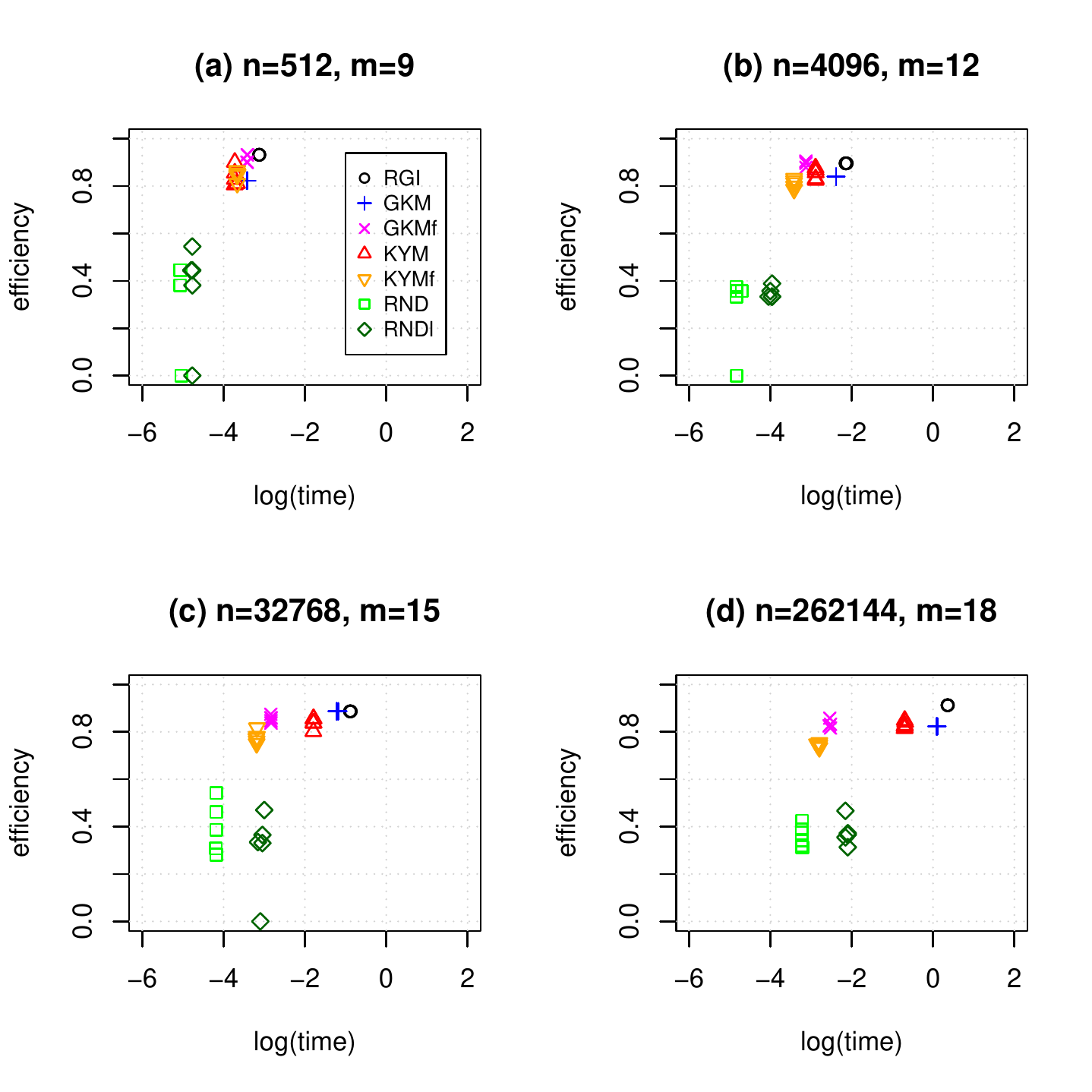}
\caption{The running times (in decadic logarithm) of the greedy saturated subsample heuristics and the $D$-efficiencies of the resulting subsets. Regressors are deterministically set to $\Fc=\{-1,1\}^m$.}\label{fTimeEffcube}
\end{figure}

\begin{figure}[!h]
\centering
\includegraphics[width=0.8\linewidth]{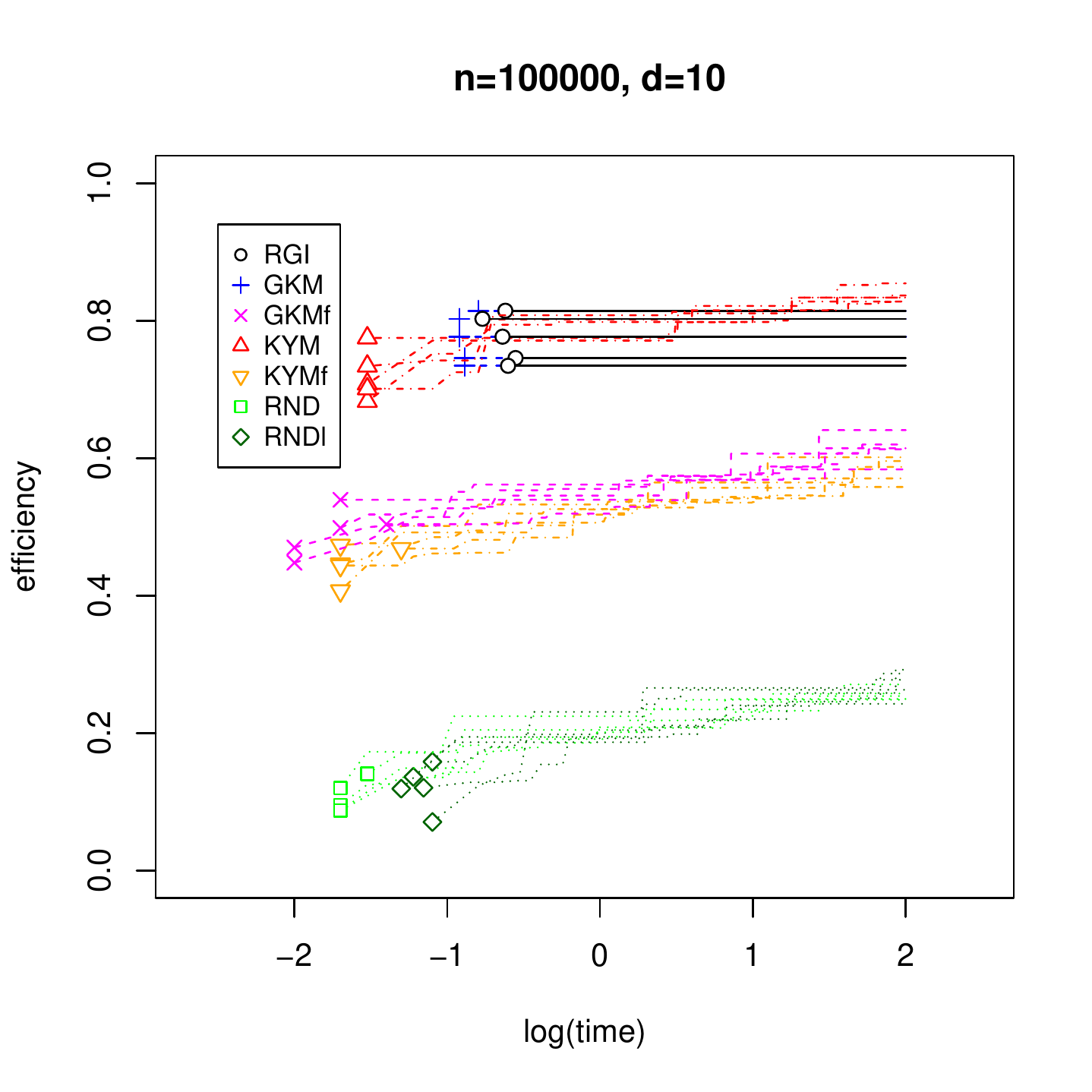}
\caption{The dependence of the best efficiency on the computation time achieved by the multistart approach. Regressors generated from a lifted $d$-dimesional normal distribution ($m=d+1$) as detailed in the text.}\label{ProfileNorm}
\end{figure}

\begin{figure}[!h]
\centering
\includegraphics[width=0.8\linewidth]{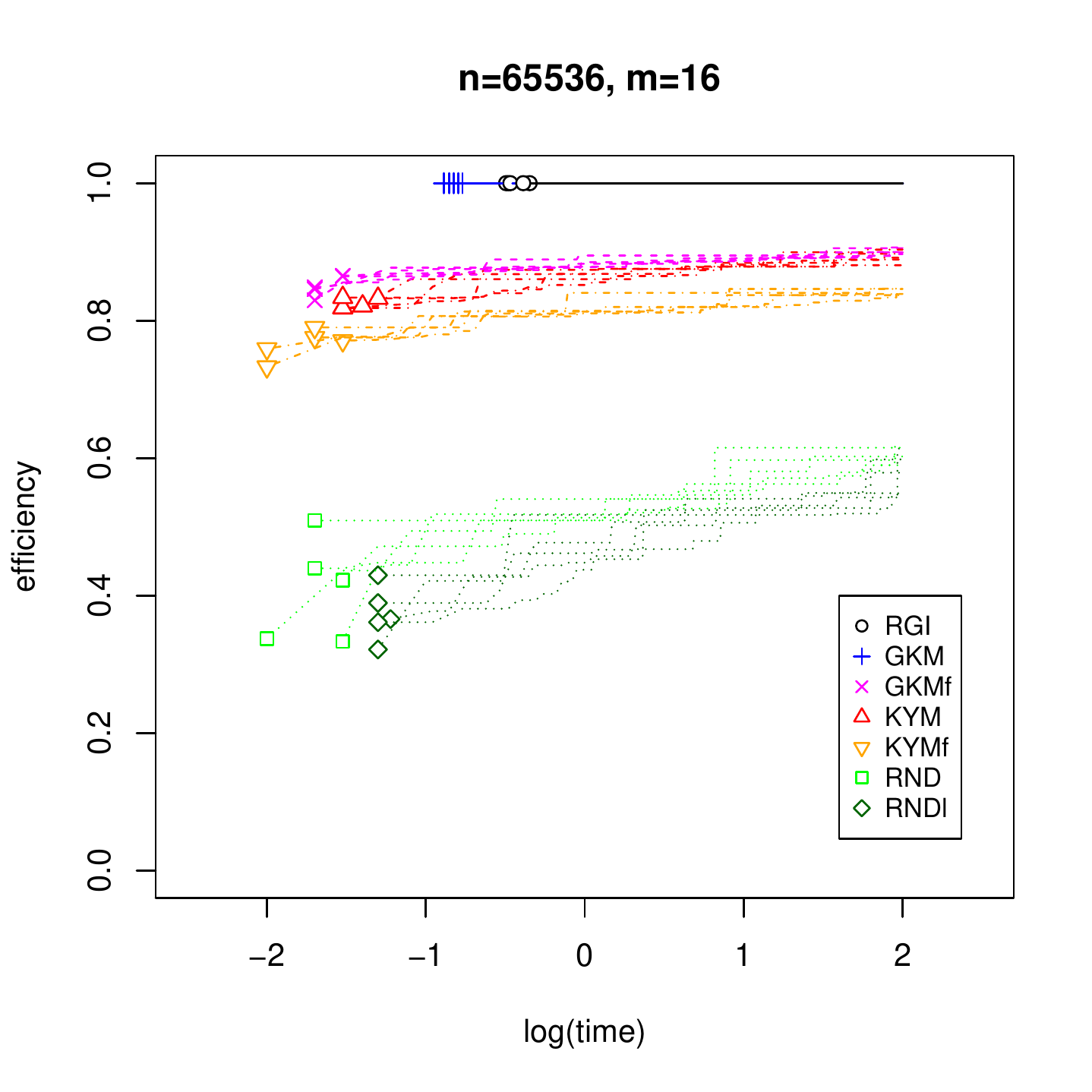}
\caption{The dependence of the best efficiency on the computation time achieved by the multistart approach.  Regressors are deterministically set to $\Fc=\{-1,1\}^m$, $m=16$. Note that for this model the RGH and the GKM produce the perfectly optimal saturated subset $\Ss^*$ such that $\F(\Ss^*)$ is a $16 \times 16$ Hadamard matrix.}\label{ProfileFact}
\end{figure}

\section{Discussion}

We showed that a good approach to the construction of $D$-efficient saturated subsets is to use the efficient version of the Galil-Kiefer method or the modified Kumar-Yildirim method. Nevertheless, the regularized greedy method and the random sampling can also be reasonable choices for some applications (the regularized greedy method if we are more concerned about the efficiency than the speed and the random sampling in the opposite situation). These heuristics can be directly generalized for the construction of non-saturated subsets. For a large size of the initial set, the pre-sampling strategy can be recommended. To minimize the risk of a singular subset, we can use the multi-run approach. Finally, the regularized greedy method and the closely related Galil-Kiefer method can be modified for the use with a different criterion than $D$-optimality.

\section*{Appendix - R Scripts} 

\noindent\textbf{Regularized greedy heuristic}
\begin{verbatim}
function(Fx, del=1e-4) {
# Fx is the nxm matrix of regressors
  m <- ncol(Fx); one <- rep(1, m)
  M <- del * diag(m)
  j <- which.max((Fx^2) %*% one)
  ind <- rep(0, m); ind[1] <- j
  for(i in 2:m) {
   M <- M + tcrossprod(Fx[j, ])
   j <- which.max(((Fx %*%
   t(chol(solve(M))))^2) %*% one)
   ind[i] <- j
  }
  return(ind)
}
\end{verbatim} 

\noindent\textbf{Galil-Kiefer method}
\begin{verbatim}
function(Fx) {
# Fx is the nxm matrix of regressors
  m <- ncol(Fx)
  v2 <- (Fx^2) %*% rep(1, m) 
  j <- which.max(v2)
  ind <- rep(0, m); ind[1] <- j
  for(i in 2:m) {
    scp <- Fx %*% Fx[j, ]
    Fx <- Fx - scp %*% t(Fx[j, ]) / v2[j]
    v2 <- v2 - scp^2 / v2[j]
    j <- which.max(v2); ind[i] <- j
  }
  return(ind)
}
\end{verbatim}

\noindent\textbf{Kumar-Yildirim method}
\begin{verbatim}
function(Fx) {
# Fx is the nxm matrix of regressors
  m <- ncol(Fx); P <- diag(m)
  j <- which.max(abs(Fx %*% rnorm(m)))
  ind <- rep(0, m); ind[1] <- j
  for(i in 2:m) { 
    fx <- P %*% Fx[j,]
    P <- P - tcrossprod(fx) / sum(fx^2)
    j <- which.max(abs(Fx %*%
                      (P %*% rnorm(m))))
    ind[i] <- j
  }
  return(ind)
}
\end{verbatim}


\end{document}